\soulregister{\item}{1}
\definecolor{airforceblue}{rgb}{0.36, 0.54, 0.66}
\definecolor{darkergreen}{rgb}{0.0, 0.5, 0.0}
\newcommand{\vertj}[1]{{\vert\kern-0.25ex\vert\kern-0.25ex\vert #1
    \vert\kern-0.25ex\vert\kern-0.25ex\vert}}
\DeclareMathOperator{\Corr}{Corr}
\DeclareMathOperator{\Cov}{Cov}
\DeclareMathOperator{\Var}{Var}
\DeclareMathOperator{\E}{\mathbb E}
\theoremstyle{plain}
\newtheorem{Theorem}{Theorem}
\newtheorem{Definition}{Definition}
\newtheorem{Assumption}{Assumption}
\theoremstyle{definition}
\newtheorem{Example}{Example}
\newtheorem{Remark}{Remark}
\begin{document}

%\title[MCCA and MFCCA]{Multiple canonical correlation analysis for multivariate repeated measures data and multivariate functional data}
\title[Canonical correlation analysis]{Two approaches to multiple canonical correlation analysis for repeated measures data}

\author{Tomasz G\'{o}recki}
\address[T.~G\'{o}recki]{Faculty of Mathematics and Computer Science\\
  Adam Mickiewicz University\\
  Pozna\'n, Poland}
\email{tomasz.gorecki@amu.edu.pl}

\author{Miros\l aw Krzy\'sko}
\address[M.~Krzy\'sko]{Interfaculty Department of Mathematics and Statistics\\
University of Kalisz\\
Kalisz, Poland}
\email{m.krzysko@uniwersytetkaliski.edu.pl}

\author{Felix Gnettner}
\address[F.~Gnettner]{Department of Statistics\\
Colorado State University\\
Fort Collins, CO, USA}
\email{Felix.Gnettner@colostate.edu}

\author{Piotr Kokoszka}
\address[P.~Kokoszka]{Department of Statistics\\
Colorado State University\\
Fort Collins, CO, USA}
\email{Piotr.Kokoszka@colostate.edu}

\begin{abstract}
In classical canonical correlation analysis (CCA), the goal is to determine the linear transformations of two random vectors into two new random variables that are most strongly correlated. Canonical variables are pairs of these new random variables, while canonical correlations are correlations between these pairs. In this paper, we propose and study two generalizations of this classical method:

(1) Instead of two random vectors we study more complex
data structures that appear in important  applications.
In these structures, there are $L$ features, each described by
$p_l$ scalars, $1 \le l \le L$. We observe $n$ such objects over $T$ time points.
We derive a suitable analog of the CCA for such data. Our approach
relies on embeddings into Reproducing Kernel Hilbert Spaces, and covers
several related  data structures as well.

(2) We develop an analogous approach for multidimensional random processes.
In this case, the experimental units are  multivariate
continuous, square-integrable functions over a given interval.
These functions are modeled as elements of a Hilbert space,
so in this case, we define the  multiple
functional canonical correlation analysis, MFCCA.

We justify our approaches by their application to two
data sets and suitable large sample theory.
We derive consistency rates for the related transformation and correlation estimators, and show that it is possible to relax two common
assumptions on the compactness of the underlying cross-covariance
operators and the independence of the data.
\end{abstract}

\keywords{multiple kernel canonical variables,
multiple functional canonical variables, multivariate repeated measures data, multivariate functional data, non-compact cross-covariance operator, dependent data, consistency rates}

\maketitle

\onehalfspacing

\section{Introduction}\label{sec:intro}
Canonical Correlation Analysis (CCA), proposed by \cite{Hotelling1936},
is often used to study relationships between two sets of features \cite{adrover2015, Krafty2013, Langworthy2021, Ma2020, Shu2020}. Linear transformations of the features in both sets (canonical variables) are constructed so that they are not correlated within each of the two  sets, but  the correlations between them  (canonical correlations) are maximal.
The objective is to maximize the correlation between data projections in the feature spaces.
This reveals the underlying structural relationships between these two feature sets, determining how much variability one set explains in the other. Such approaches are useful to determine if there exists a linear, or in more advanced cases nonlinear, mapping that transforms the sets of features into each other.

In this paper, the concepts and techniques of CCA are analyzed in  the case of more than two sets of features. Multiple CCA aims to  identify underlying patterns of correlation across more than two feature sets by finding transformations  of the variables within each set that are maximally correlated across the sets. Multiple CCA extends the concept of CCA to find shared relationships in complex, multi-set data, such as across multiple data matrices or different types of measurements, helping to reveal common structures or patterns that might be missed by analyzing pairs of sets individually. After estimating the optimal transformations, pairwise scatter plots of the transformed feature sets offer a powerful tool to visualize these patterns and structures.

\subsection*{A motivating data set}
To motivate the methodology developed in this paper, we consider the
Global Competitiveness  Index (GCI) data set. This example is developed further in  Section \ref{ss:gci}.
 For $n=115$ countries, $L=12$ features of economic competitiveness have been recorded. Two of them are institutions ($l=1$) and infrastructure ($l=2$). To each feature $l$ belongs a number $p_l$ of scalar indicators. The data have been recorded annually over $T=10$ years.

\begin{figure}[!tbh]
	\centering\includegraphics[width = 1\textwidth]{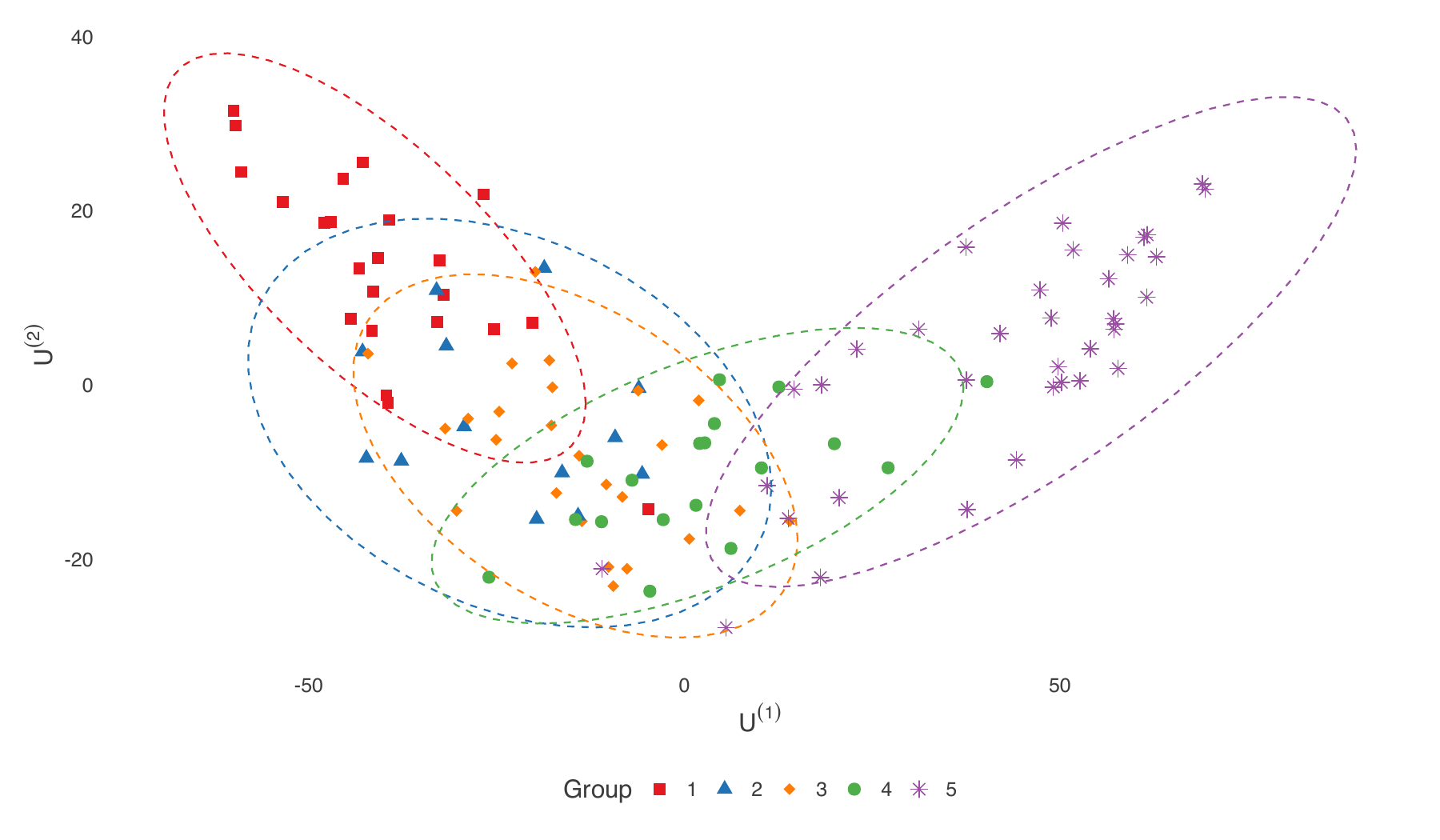}
	\caption{Scatterplot for the optimally transformed feature pairs in the GCI data set (115 countries and five groups)
in the system of the first two multiple kernel canonical variables $(U^{(1)}, U^{(2)})$ (with 95\% confidence normal ellipses). The optimal transformations were determined by multiple kernel CCA for multivariate repeated measures data, as described in Section~\ref{sec:kcca}.}
\label{fig:cca3}
\end{figure}

For a feature $l$ in country $k$, these observations can be encoded in a $T \times p_l$ matrix
\[
\pmb A_l[k] =
\begin{pmatrix}
a_{11} & \cdots & a_{1 p_l} \\
\vdots & \vdots & \vdots \\
a_{T1} & \cdots & a_{T p_l} \\
\end{pmatrix}, \ k \in \{1,\ldots,n\}, \quad l \in \{1,\ldots,L\}.
\]
The columns of the matrix $\pmb A_l[k]$ can be thought of as time courses, or repeated
measurements, of the $p_l$ indicators for country $k$.
The number and nature of scalar indicators may vary from feature to feature, but the
experimental units and number $T$ must be the same across all features.

The most elementary version of the multiple canonical correlations problem in the above setting is the optimization problem
\begin{align} \label{e:P1}
\max_{\pmb u_1, \ldots, \pmb u_L}\sum_{i=1}^L\sum_{j=1}^L
\Corr(\pmb A_i[1]\pmb u_i, \pmb A_j[1]\pmb u_j),
\end{align}
where for each feature $l \in \{1,\ldots,L\}$,  only $n=1$ country at only $T=1$ time point are considered.
This optimization problem is a population version; the correlations
are not estimated.
Each vector $\pmb u_l$ is called a ``weight vector'', and the vector
$\pmb y_l = \pmb A_l[1]\pmb u_l$ is called a ``component''.
The objective is to find weight vectors
\[[\pmb u_1^*,\ldots,\pmb u_L^*] = \arg\max_{\pmb u_1, \ldots, \pmb u_L}\sum_{i=1}^L\sum_{j=1}^L
\Corr(\pmb A_i[1]\pmb u_i, \pmb A_j[1]\pmb u_j)
\]
that lead to linear feature combinations within each block
with maximum correlation across all subjects.

For most applications, it is impossible to solve \eqref{e:P1}, because the underlying population correlation values are unknown. Thus, relaxing the problem in \eqref{e:P1} via replacing the correlation expressions by estimators of the related covariances, based on all $n=115$ countries, and introducing a suitable normalization constraint is required. For a concise mathematical expression, we refer to \eqref{e:sample_prob}.

The solution of such a sample-based problem is an estimator of the corresponding population optimum $[\pmb u_1^*,\ldots,\pmb u_L^*]$, which is required to compute the transformation mappings
\begin{equation}\label{e:transf}
    U^{(l)}(\cdot) = \langle {\pmb u}_l^*,\cdot \rangle,\qquad l \in\{1,\ldots,L\}.
\end{equation}
Plugging in each data point in this mapping generates a representation of the multiple CCA results. For a feature pair of interest, a scatter plot of the transformed data is produced, as depicted in Figure~\ref{fig:cca3} for the pair $(1,2)$. The different colors in this plot arise from the fact that the countries have been, independently from the analysis, divided into 5 groups by experts of the World Economic Forum. The multiple CCA was performed globally with respect to all countries, and it is easy to see that the clusters in Figure~\ref{fig:cca3} partially correspond to the group assignments.

\subsection*{Connections to previous research}
The multiple CCA problem was first considered by
Horst \cite{Horst1961a} whose solution is  called the
``maximum correlation method''. It, however, suffers from
severe problems with numerical convergence, \cite{berge1988}.
Kettenring \cite{Kettenring1971} proposed a different solution (also for $T = 1$)
to this problem, and named it the SUMCOR.
Also relevant to our  work, \cite{Carroll1968a,Hwang20123, Hwang2012} extended the CCA to the case of several data sets. Their method is known as the
``generalized canonical correlation analysis (GCCA)'' or the
``multiple-set canonical correlation analysis (MCCA)''.
It has been considered by many authors, e.g.,
\cite{Gower1989, Gloaguen2022, Lafosse1989, Park1996, Tenenhaus2011,
Tenenhaus2014a, Tenenhaus2015,Gifi1990, Yanai1998,Tenenhaus2017}.

Canonical correlation analysis for univariate functional data was introduced
in~\cite{Leurgans1993}, who studied two functional data sets.
It is also explained in Chapter 4 of \cite{HKbook} where references to
more recent research in the case of two functional data sets are given.
The CCA
of two data sets of Multidimensional Functional Data (MFCCA) was studied in
\cite{Gorecki2017, Gorecki2018, krzysko2019}.

\subsection*{Contributions}
We extend the methodology of multiple CCA to the new setting of
time-dependent observations of the multiple features of interest
obtained from different experimental units. We propose two approaches. 
Our first approach relies on kernel embeddings of each 
block into a Reproducing Kernel Hilbert space. 
We derive consistency rates that cover data scenarios in which the 
cross-covariance operators are not compact or the observations 
in the sample have some dependence structure among the experimental units. 
Neither case, to the best of our knowledge, has been studied, but they are 
relevant for data scenarios that motivate our
work. Assuming  independence and compactness is often an assumption 
of convenience, so our theory has practical relevance.  
Our second approach is an extension of multiple CCA to functional data 
indexed by time, which we think is particularly suitable for the
problems that motivate this work. In the functional context, we view the
repeated measurements as smooth time courses.  We show 
that the consistency rates derived for the multiple kernel CCA carry over 
to this extension. To illustrate the functional context, 
in the competitiveness index
example, suppose $Y_{l,j}^{(k)}(t)$
denotes the  value of indicator $j$ in the feature $l$ for country
$k$ in year $t$.  Unless a country experiences a dramatic change, like a
government overthrow or a financial crash, it is reasonable to assume that
the values of the $Y_{l,j}^{(k)}(t)$ evolve smoothly from year to year. 
This will be
true for other examples, like the performance of sectors of the economy 
or agriculture.
In such cases, a regularization of the time trajectories might lead to 
better-performing procedures.

Unlike in regression,
no specific response variables exist in the data structures we study. 
Our methodology
does not rely on assigning the roles of predictor and response variables.
The consistency of multiple CCA has not yet been theoretically analyzed
in the literature.

Our framework provides a unified view of several extensions of 
CCA, allowing both finite-dimensional and functional data to be treated 
within the same operator-theoretic setting. This perspective facilitates 
a rigorous comparison between kernel- and function-based approaches.  
The proposed approaches may serve as a foundation for 
advances  in consistent inference for high-dimensional dependent data
and regularized and sparse variants of multiple CCA.

\subsection*{Organization of the paper}
In Section~\ref{sec:kcca}, we derive the consistency rates of the multiple kernel CCA for repeated measures, while in Section~\ref{sec:fcca} multiple functional CCA is introduced.
Section~\ref{s:ex} contains real data studies that demonstrate the usefulness of and compare the proposed methods.
A summary and conclusions are presented in Section~\ref{sec:final}.

\section{Multiple kernel canonical correlation analysis in the case of repeated measures data}
\label{sec:kcca}
In this section,  we introduce multiple canonical correlation analysis and the corresponding estimators based on repeated measures data. We show that the consistency results in CCA can be extended to multiple CCA, including cases where

\begin{itemize}
    \item there is dependence in the data among the experimental units. This often arises when the units represent different time periods, or are related to each other because of their location.
    \item the underlying covariance operators are not compact. This is a common assumption in the literature, but it cannot be verified in practice. We showcase a simple example, where this assumption can be dropped.
\end{itemize}

Our theory builds on the classical kernel canonical correlation analysis problem \cite{akaho2001, alam2010, alam2013, alam2015, bach2002, bilenko2016,
Fukumizu2007, Hardoon2004, Hardoon2009, lai2000}, where only two features are considered, i.e.\ $L = 2$.
We begin the
with a review of relevant definitions related to Reproducing Kernel Hilbert Spaces
(RKHS) and their selected properties in order to facilitate the exposition that follows.

Suppose $[X_1,\ldots,X_L]$, are possibly dependent random elements with joint distribution\linebreak $P^{[X_1,\ldots,X_L]}$, taking values in a general space $\bigtimes_{l=1}^L \mathcal S_l$. The random elements $[X_1,\ldots,X_L]$ correspond to $[\pmb A_1[1],\ldots,\pmb A_L[1]]$ in our motivating data example in the population case of $n=1$, where no repeated measurements are available.

\begin{Definition}\label{def:RKHS}
For each $l \in \{1,\dots,L\}$ consider a measurable positive definite \textit{kernel} $K_l: \mathcal S_l \times \mathcal S_l \to \mathbb R$, and the associated kernel embedding $x \mapsto K_l(\cdot,x)$ of $x \in \mathcal S_l$ into the reproducing kernel Hilbert space $\mathcal H_{K_l}$. The Moore-Aronszajn theorem \cite{aronszajn1950} guarantees that the inner product in $\mathcal{H}_{K_l}$
determines the kernel, i.e.
	\[K_l(x,y) = \langle K_l(\cdot,x), K_l(\cdot,y)\rangle_{\mathcal{H}_{K_l}} \textrm{ for all } x,y\in \mathcal S_l.\]
\end{Definition}

Based on the notation introduced above we are now able to define the population version of the optimization problem for multiple kernel canonical correlation analysis.

\begin{Definition}
    The population multiple kernel CCA problem in the RKHS $\bigtimes_{l=1}^L \mathcal H_{K_l}$ is defined as
    \[ \max_{g_1 \in \mathcal H_{K_1},\ldots,g_L \in \mathcal H_{K_L}} \sum_{i=1}^L \sum_{j \in \{1,\ldots,L\}\setminus \{i\}} \Cov(g_i(X_i),g_j(X_j))
    %\langle g_i,\pmb C_{i,j} \ g_j \rangle_{\mathcal H_K}
    \quad \text{s.t. } \sum_{k=1}^L\Var(g_k(X_k)) = L.
    %\sum_{k=1}^p \langle g_k,\pmb C_{k,k} \ g_k \rangle = P.
    \]
\end{Definition}

    In contrast to Section~\ref{sec:intro}, this optimization problem is formulated
    in the context of an RKHS rather than in the linear space spanned by the feature vectors. It thus allows us to deal with nonlinear dependencies between the features.
   For $i \neq j$, $\pmb C_{i,j}: \mathcal H_{K_j} \to \mathcal H_{K_i}$ denotes the cross-covariance operator between $X_i$ and $X_j$ and $\pmb C_{i,i}: \mathcal H_{K_i} \to \mathcal H_{K_i}$ is the covariance operator of $X_i$.

\begin{Remark}
    Using the above covariance operator notation, the multiple kernel CCA problem can be written as
    \begin{align}\label{e:optimization_problem}
        \rho_{\mathcal F}=\max_{g_1 \in \mathcal H_{K_1},\ldots,g_L \in \mathcal H_{K_L}} \sum_{i=1}^L \sum_{j \in \{1,\ldots,L\}\setminus \{i\}} \langle g_i,\pmb C_{i,j} \ g_j \rangle_{\mathcal H_{K_i}} \quad \text{s.t. } \sum_{k=1}^L \langle g_k,\pmb C_{k,k} \ g_k \rangle_{\mathcal H_{K_k}} = L.
    \end{align}
\end{Remark}
At first glance, it is not clear whether the optimization problem has a solution that is easy to determine. The following remark provides a clarification.
\begin{Remark}\label{r:eig_prob}
    The solution of the multiple kernel CCA problem is given by the appropriately normalized eigenfunctions belonging to the largest eigenvalue $\rho_{\mathcal F}$ of the following generalized eigenvalue problem:
\begin{align}\label{e:gen_eigenproblem}
    \sum_{j \in \{1,\ldots,L\}\setminus \{i\}} \pmb C_{i,j} \ g_j = \rho \cdot \pmb C_{i,i}\ g_i, \quad i \in \{1,\ldots,L\}.
\end{align}
This can be easily seen by computing the gradient of the Lagrangian of \eqref{e:optimization_problem}, i.e.
\[\nabla \mathcal L(g_1,\ldots,g_L;\lambda) =\begin{pmatrix}\sum_{j \in \{1,\ldots,L\}\setminus \{1\}} \pmb C_{1,j} \ g_j\\ \vdots \\ \sum_{j \in \{1,\ldots,L\}\setminus \{L\}} \pmb C_{L,j} \ g_j,\\ \sum_{k=1}^L \langle g_k,\pmb C_{k,k} \ g_k \rangle_{\mathcal H_{K_k}}- L \end{pmatrix} + \lambda \cdot \begin{pmatrix}\pmb C_{1,1}\ g_1\\ \vdots \\ \pmb C_{L,L}\ g_L \\ 0 \end{pmatrix}.\]
Solving $\nabla \mathcal L(g_1,\ldots,g_L;\lambda)= (0,\ldots,0)^T$, where the last component is only for normalization purposes, leads to the generalized eigenvalue problem \eqref{e:gen_eigenproblem}, which can be written as
\[\begin{pmatrix}
    0 & \pmb C_{1,2} & \pmb C_{1,3} & \cdots & \pmb C_{1,L}\\
    \pmb C_{2,1} & 0 & \pmb C_{2,3} & \cdots & \pmb C_{2,L}\\
    \pmb C_{3,1} & \pmb C_{3,2} & 0 & \cdots & \pmb C_{3,L}\\
    \vdots & \vdots & \vdots & \cdots & \vdots\\
    \pmb C_{L,1} & \pmb C_{L,2} & \pmb C_{L,3} & \cdots & 0
\end{pmatrix} \begin{pmatrix}
    g_1\\ g_2\\ g_3\\ \vdots \\ g_L
\end{pmatrix} = \rho
\begin{pmatrix}
    \pmb C_{1,1} & 0 & 0 & \cdots & 0\\
    0 & \pmb C_{2,2} & 0 & \cdots & 0\\
    0 & 0 & \pmb C_{3,3} & \cdots & 0\\
    \vdots & \vdots & \vdots & \cdots & \vdots\\
    0 & 0 & 0 & \cdots & \pmb C_{L,L}
\end{pmatrix}  \begin{pmatrix}
    g_1\\ g_2\\ g_3\\ \vdots \\ g_L
\end{pmatrix}. \]
If $\pmb C_{1,1},\ldots,\pmb C_{L,L}$ are strictly positive definite, this generalized eigenvalue problem can be easily transformed into an ordinary eigenvalue problem, which reads\begin{multline}\label{e:eigenvalue_problem}
    \begin{pmatrix}
    0 &\pmb C_{1,1}^{-1/2} \ \pmb C_{1,2} \ \pmb C_{2,2}^{-1/2} &\pmb C_{1,1}^{-1/2} \ \pmb C_{1,3} \ \pmb C_{3,3}^{-1/2} & \cdots &\pmb C_{1,1}^{-1/2} \ \pmb C_{1,L} \ \pmb C_{L,L}^{-1/2}\\
    \pmb C_{2,2}^{-1/2} \ \pmb C_{2,1} \ \pmb C_{1,1}^{-1/2} & 0 & \pmb C_{2,2}^{-1/2} \ \pmb C_{2,3} \ \pmb C_{3,3}^{-1/2} & \cdots & \pmb C_{2,2}^{-1/2} \ \pmb C_{2,L} \ \pmb C_{L,L}^{-1/2}\\
    \pmb C_{3,3}^{-1/2} \ \pmb C_{3,1} \ \pmb C_{1,1}^{-1/2} & \pmb C_{3,3}^{-1/2} \  \pmb C_{3,2} \ \pmb C_{2,2}^{-1/2} & 0 & \cdots &\pmb C_{3,3}^{-1/2} \  \pmb C_{3,L} \ \pmb C_{L,L}^{-1/2}\\
    \vdots & \vdots & \vdots & \cdots & \vdots\\
    \pmb C_{L,L}^{-1/2} \ \pmb C_{L,1} \ \pmb C_{1,1}^{-1/2} &\pmb C_{L,L}^{-1/2} \ \pmb C_{L,2} \ \pmb C_{2,2}^{-1/2} &\pmb C_{L,L}^{-1/2} \  \pmb C_{L,3} \ \pmb C_{3,3}^{-1/2} & \cdots & 0
\end{pmatrix}\cdot \begin{pmatrix}
    f_1\\ f_2\\ f_3\\ \vdots \\ f_L
\end{pmatrix}
 \\= \rho  \begin{pmatrix}
    f_1\\ f_2\\ f_3\\ \vdots \\ f_L
\end{pmatrix}.
\end{multline}
\end{Remark}

\begin{Remark}
    Considering the optimization problem
    \begin{equation}\label{e:problem_too_general}
        \max_{g_1 \in \mathcal H_{K_1},\ldots,g_L \in \mathcal H_{K_L}} \sum_{i=1}^L \sum_{j \in \{1,\ldots,L\}\setminus \{i\}} \langle g_i,\pmb C_{i,j} \ g_j \rangle_{\mathcal H_{K_i}} \quad \textrm{s.t. } \langle g_k,\pmb C_{k,k} \ g_k \rangle_{\mathcal H_{K_k}} = 1 \textrm{ for all } k \in \{1,\ldots,L\}
    \end{equation}
    instead of \eqref{e:optimization_problem} leads to mathematical difficulties. As pointed out in \cite[Section A, page 296]{Nielsen2002}, \eqref{e:problem_too_general} cannot be reduced to an ordinary generalized eigenvalue problem.
\end{Remark}

In the setting of multiple kernel CCA with repeated measurements, which is the sample version of the problem in \eqref{e:optimization_problem}, we deal with observations $[X_1^{(1)},\ldots,X_L^{(1)}],\ldots$, $[X_1^{(n)},\ldots,X_L^{(n)}]$ that can be assumed to be iid for a gentle introduction into the topic. The independence assumption can be replaced by some notion of weak dependence, see Example~\ref{example2}.

Replacing the population covariance operators in \eqref{e:optimization_problem} by corresponding estimators
\begin{multline}\label{e:cov_estimator}
    \langle g_i, \pmb{\widehat{C}}_{i,j}^{(n)}  \ g_j \rangle_{\mathcal H_{K_i}} = \frac 1n \sum_{k=1}^n \left\langle g_i,K_i(\cdot,X_i^{(k)})-\frac 1n \sum_{\ell=1}^n K_i(\cdot,X_i^{(\ell)}) \right\rangle_{\mathcal H_{K_i}}\\ \cdot \left\langle g_j,K_j(\cdot,X_j^{(k)})-\frac 1n \sum_{\ell=1}^n K_j(\cdot,X_j^{(\ell)}) \right\rangle_{\mathcal H_{K_j}}
\end{multline}
yields
\begin{equation}\label{e:sample_prob}
    \widehat \rho_{\mathcal F} = \max_{g_1 \in \mathcal H_{K_1},\ldots,g_L \in \mathcal H_{K_L}} \sum_{i=1}^L \sum_{j \in \{1,\ldots,L\}\setminus \{i\}} \langle g_i,\pmb{\widehat{C}}_{i,j}^{(n)} \ g_j \rangle_{\mathcal H_{K_i}} \quad \text{s.t. } \sum_{k=1}^L \langle g_k,\pmb{\widehat{C}}_{k,k}^{(n)} \ g_k \rangle_{\mathcal H_{K_k}} = L.
\end{equation}

In contrast to the previous population-based problems, \eqref{e:sample_prob} incorporates covariance estimators based on data from a sample of $n$ experimental units.

Solving this optimization problem via finding the solutions of a generalized eigenvalue problem as in Remark~\ref{r:eig_prob} requires some modification to keep it numerically tractable. The inversion of the covariance operator estimates $\pmb{\widehat C}_{i,i}^{(n)}$ is required for this purpose, but these usually do not have full rank. Consequently, some form of regularization is needed.
Various regularizations have been studied in similar contexts, \cite{alam2010,alam2013,bilenko2016,Tenenhaus2011,
Tenenhaus2014a,Tenenhaus2015,Tenenhaus2017}.
Basically, without any regularization, it is possible to find an estimate $\widehat \rho_{\mathcal F}$ that indicates perfect dependence, and does not entail any meaningful information on the correlations of the features.

Replacing $\pmb{\widehat C}_{i,i}^{(n)}$ by $\pmb{\widehat C}_{i,i}^{(n)} + \epsilon_n \cdot \pmb I$ with a positive decreasing sequence $(\epsilon_n)_{n \in \mathbb N}$, $\epsilon_n \to 0$, solves the invertibility issues. So, the regularized problem reads
\begin{equation}\label{e:reg_eig_prob}
    \widehat \rho_{\mathcal F} = \max_{g_1 \in \mathcal H_{K_1},\ldots,g_L \in \mathcal H_{K_L}} \sum_{i=1}^L \sum_{j \in \{1,\ldots,L\}\setminus \{i\}} \langle g_i,\pmb{\widehat{C}}_{i,j}^{(n)} \ g_j \rangle_{\mathcal H_{K_i}} \quad \text{s.t. } \sum_{k=1}^L \langle g_k,(\pmb{\widehat C}_{i,i}^{(n)} + \epsilon_n \cdot \pmb I)\ g_k \rangle_{\mathcal H_{K_k}} = L.
\end{equation}

For the sake of lucidity, in the
following, we state the eigenvalue problem \eqref{e:eigenvalue_problem} as
\begin{equation}\label{eq:CovOpMat}
    \mathfrak C_{L} \ \mathfrak f = \rho \ \mathfrak f,
\end{equation}
and the regularized sample version \eqref{e:reg_eig_prob} reads
\begin{align}\label{e:CCAproblem_short}
    \widehat{\mathfrak C}_{n,L} \ \widehat{\mathfrak f}_n = \widehat \rho \ \widehat{\mathfrak f}_n.
\end{align}

To obtain a simple numerical solution of \eqref{e:reg_eig_prob}, the Gram matrices of each feature are considered to get a computationally tractable formulation of the generalized eigenvalue problem.
\begin{Definition}
Given a  kernel $K_l$ and a sample
$X_l^{(1)},\ldots,X_l^{(n)}$ with values in $\mathcal S_l$,
the $n\times n$ matrix $\pmb G_l$ with entries $K_l(X_l^{(i)}, X_l^{(n)})$
is called  the \textit{Gram matrix}, or the \textit{kernel matrix},
of $K_l$ with respect to the given sample. $\pmb G_l$ is \textit{positive semi-definite} if $\pmb{{c}}^\top\pmb G_l\pmb{{c}}\geq 0 $ holds for any $\pmb c\in\mathbb{R}^n$
\end{Definition}

The centered kernel matrices $\pmb{\widetilde G}_l$ are defined by~\cite{Scholkopf2008} as
\begin{equation} \label{e:G-widetilde}
\pmb{\widetilde G}_l = \pmb H\pmb G_l\pmb H, \ \ \ \pmb H = \pmb I_n-\frac{1}{n}\pmb1_n\pmb1_n^\top, \quad l \in \{1,\ldots,n\}.
\end{equation}
The matrix $\pmb H$ is called
the \textit{centering matrix}; $\pmb I_n$ is the identity matrix of order $n$ and $\pmb 1_n$ is the column-vector with $n$ $1$s. Elementary linear algebra shows that the entries of $\pmb{\widetilde G}_{l}$ can be written as
\begin{multline}\label{e:gram_matrix}
        (\pmb{\widetilde G}_{l})_{k,\ell} = K_l(X_l^{(k)},X_l^{(\ell)}) - \frac 1n \sum_{b=1}^n K_l(X_l^{(k)},X_l^{(b)}) - \frac 1n \sum_{a=1}^n K_l(X_l^{(a)},X_l^{(\ell)})\\ + \frac{1}{n^2} \sum_{a=1}^n\sum_{b=1}^n K_l(X_l^{(a)},X_l^{(b)}).
\end{multline}

In our motivating data example, $\mathcal S_l = \mathbb{R}^{T\times p_l}$  and the repeated measurements $X_l^{(1)},\ldots,X_l^{(n)}$ correspond to the blocks $\pmb A_l[1],\ldots,\pmb A_l[n]$ for each feature $l \in \{1,\ldots,n\}$. In this context, we consider Gaussian kernels $K_l: \mathbb{R}^{T\times p_l} \times \mathbb{R}^{T\times p_l} \to \mathbb R$, so that the corresponding Gram matrix has the entries
\begin{equation}\label{e:GK}
    (\pmb G_l)_{i,j} = K_l(\pmb A_l[i],\pmb A_l[j]) = \exp \left(-\gamma \|\pmb A_l[i]-\pmb A_l[j]\|_F^2 \right),\quad i,j \in\{1,\ldots, n\},
\end{equation}
where $\|\cdot\|_F$ denotes the Frobenius norm and $\gamma$ is a fixed positive constant.
Here, the kernel $K_l$ can be regarded as a similarity measure between two elements
$\pmb A_l[i]$ and $\pmb A_l[j]$, relevant to our specific real data analysis problem.

With the above preparation, we can formulate the regularized problem \eqref{e:reg_eig_prob} by means of the centered kernel Gram matrices in \eqref{e:gram_matrix}. Since the range of of each $\widehat{\pmb C}_{i,i}, i \in \{1,\ldots,L\},$ is spanned by $u_i^{(1)},\ldots,u_i^{(n)}$ with $u_i^{(k)} = K_i(\cdot,X_i^{(k)})-\frac 1n \sum_{\ell=1}^n K_i(\cdot,X_i^{(\ell)})$, we can write each sololution $g_1,\ldots,g_l$ of the MCCA problem as a linear combination, i.e. there exist vectors $\pmb w_1,\ldots,\pmb w_L \in \mathbb R^n$ such that \[g_i = \sum_{k=1}^n w_i^{(k)} \cdot u_i^{(k)}, \quad i \in \{1,\ldots,L\}.\] Thus, it holds
\begin{align*}
        \langle g_i, \pmb{\widehat{C}}_{i,j}^{(n)}  \ g_j \rangle_{\mathcal H_{K_i}} &= \frac 1n \sum_{k=1}^n \left\langle g_i,u_{i}^{(k)} \right\rangle_{\mathcal H_{K_i}} \cdot \left\langle g_j,u_{j}^{(k)} \right\rangle_{\mathcal H_{K_j}}\\
        &= \frac 1n \sum_{k=1}^n \left\langle \sum_{l_1=1}^n w_{i}^{(l_1)} \cdot u_{i}^{(l_1)},u_{i}^{(k)} \right\rangle_{\mathcal H_{K_i}} \cdot \left\langle \sum_{l_2=1}^n w_{j}^{(l_2)} \cdot u_{j}^{(l_2)},u_{j}^{(k)} \right\rangle_{\mathcal H_{K_j}}\\
        &= \frac 1n \sum_{k=1}^n \sum_{l_1=1}^n \sum_{l_2=1}^n w_{i}^{(l_1)} \left\langle u_{i}^{(l_1)},u_{i}^{(k)} \right\rangle_{\mathcal H_{K_i}}\left\langle u_{j}^{(l_2)},u_{j}^{(k)} \right\rangle_{\mathcal H_{K_j}}w_{j}^{(l_2)}= \frac{\pmb w_{i}^T\pmb{\widetilde G}_{i}\pmb{\widetilde G}_{j} \pmb w_{j}}{n},
\end{align*}
as well as
\[\langle g_i,(\widehat{\pmb C}_{i,i} + \epsilon_n \cdot \pmb I) \ g_i \rangle_{\mathcal H_{K_i}} = \frac{\pmb w_{i}^T\pmb{\widetilde G}_{i}^2 \pmb w_{i}}{n} + \epsilon_n \cdot  \pmb w_{i}^T\pmb{\widetilde G}_{i} \pmb w_{i}= \pmb w_{i}^T\left(\frac{\pmb{\widetilde G}_{i}^2}n + \epsilon_n \cdot \pmb{\widetilde G}_{i}\right)\pmb w_{i},\]
where $\pmb w_{i}^T = (w_{i}^{(1)},\ldots,w_{i}^{(n)})$. We made use of the identity $\langle K_i(\cdot,X_i^{(l_1)}),K_i(\cdot,X_i^{(l_2)}) \rangle_{\mathcal H_{K_i}} = K_i(X_i^{(l_1)},X_j^{(l_2)})$, which implies that $\langle u_{i}^{(k)},u_{i}^{(\ell)} \rangle_{\mathcal H_{K_i}} = (\pmb{\widetilde G}_{i})_{k,\ell}$.

These transformations yield the following formulation of the sample kernel MCCA problem \eqref{e:reg_eig_prob}:
\begin{equation}\label{e:reg_eig_prob_gram}
    \widehat \rho_{\mathcal F} = \max_{\pmb w_1,\ldots,\pmb w_L \in \mathbb R^n} \sum_{i=1}^L \sum_{j \in \{1,\ldots,L\}\setminus \{i\}}  \frac{\pmb w_{i}^T\pmb{\widetilde{G}}_{i}\pmb{\widetilde{G}}_{j} \pmb w_{j}}{n} \quad \text{s.t. } \sum_{k=1}^L \pmb w_{i}^T\left(\frac{\pmb{\widetilde{G}}_{i}^2}n + \epsilon_n \cdot \pmb{\widetilde{G}}_{i}\right) \pmb w_{i} = L.
\end{equation}
In block matrix notation, this reads
\begin{multline*}
\frac 1n \cdot \begin{pmatrix}
    0 & \pmb{\widetilde{G}}_1 \pmb{\widetilde{G}}_2 & \pmb{\widetilde{G}}_1 \pmb{\widetilde{G}}_3 & \cdots & \pmb{\widetilde{G}}_1 \pmb{\widetilde{G}}_L\\
    \pmb{\widetilde{G}}_2 \pmb{\widetilde{G}}_1 & 0 & \pmb{\widetilde{G}}_2 \pmb{\widetilde{G}}_3 & \cdots & \pmb{\widetilde{G}}_2 \pmb{\widetilde{G}}_L\\
    \pmb{\widetilde{G}}_3 \pmb{\widetilde{G}}_1 & \pmb{\widetilde{G}}_3 \pmb{\widetilde{G}}_2 & 0 & \cdots & \pmb{\widetilde{G}}_3 \pmb{\widetilde{G}}_L\\
    \vdots & \vdots & \vdots & \cdots & \vdots\\
    \pmb{\widetilde{G}}_L \pmb{\widetilde{G}}_1 & \pmb{\widetilde{G}}_L \pmb{\widetilde{G}}_2 & \pmb{\widetilde{G}}_L \pmb{\widetilde{G}}_3 & \cdots & 0
\end{pmatrix} \begin{pmatrix}
    \pmb w_1\\ \pmb w_2\\ \pmb w_3\\ \vdots \\ \pmb w_L
\end{pmatrix}\\ = \rho
\begin{pmatrix}
    \frac{\pmb{\widetilde{G}}_1^2}{n} + \epsilon_n \cdot \pmb{\widetilde{G}}_1 & 0 & 0 & \cdots & 0\\
    0 & \frac{\pmb{\widetilde{G}}_2^2}{n} + \epsilon_n \cdot \pmb{\widetilde{G}}_2 & 0 & \cdots & 0\\
    0 & 0 & \frac{\pmb{\widetilde{G}}_3^2}{n} + \epsilon_n \cdot \pmb{\widetilde{G}}_3 & \cdots & 0\\
    \vdots & \vdots & \vdots & \cdots & \vdots\\
    0 & 0 & 0 & \cdots & \frac{\pmb{\widetilde{G}}_\mathrm L^2}{n} + \epsilon_n \cdot \pmb{\widetilde{G}}_L
\end{pmatrix} \begin{pmatrix}
    \pmb w_1\\ \pmb w_2\\ \pmb w_3\\ \vdots \\ \pmb w_L
\end{pmatrix}.
\end{multline*}
Instead, the function space related representation \eqref{e:reg_eig_prob}, the above finite-dimensional representation is suitable for numerical solvers.
Now, we formulate the assumptions for our multiple kernel canonical correlations consistency result. Our theory is based on the proof strategies of \cite{Fukumizu2007}, but we relax several of their assumptions.

\begin{Assumption}\label{asm1}
    $\E(K_i(X_i,X_i))<\infty$ holds for all $i \in \{1,\ldots,L\}$.
\end{Assumption}
As we consider covariance operators in a RKHS setting, Assumption
\ref{asm1} is required to guarantee that the random variables $g_i(X_i)$ have finite second moments for $g_i \in \mathcal H_{K_i}, i \in \{1,\ldots, L\}$.

\begin{Assumption}\label{asm2}
    The covariance operators $\pmb C_{i,i}$, $i \in \{1,\ldots,L\}$, are strictly positive definite.
\end{Assumption}
Assumption \ref{asm2} ensures that the generalized eigenvalue problem can be transformed into the ordinary eigenvalue problem.

\begin{Assumption}\label{asm3}
    For the sequence $(\epsilon_n)_{n \in \mathbb N}$ it holds $\epsilon_n \to 0$ as well as $n^{1/3}\epsilon_n\to \infty$, as $n\to \infty$.
\end{Assumption}
As explained previously, some regularization is required because the covariance operator estimates do not have full rank and thus are not invertible. For our asymptotic considerations, the regularization sequence
must converge to 0, but not too fast.

\begin{Assumption}\label{asm4}
    The data $[X_1^{(1)},\ldots,X_L^{(1)}],\ldots,[X_1^{(n)},\ldots,X_L^{(n)}]$ have a dependence structure such that for any $(i,j) \in \{1,\ldots,n\}$ the estimator $\pmb{\widehat C}_{i,j}$ is weakly $\sqrt n$-consistent in operator norm, i.e. $O_P(1/\sqrt n)$,
\end{Assumption}
By the multivariate CLT, Assumption \ref{asm4} holds for under the usual
assumption of independence and identical distribution across $n$,
but also under notions of weak dependence that imply weak invariance principles.

While Assumptions \ref{asm1}-\ref{asm3} are fairly standard, Assumption~\ref{asm4} is more delicate. Below, we present two examples in which Assumption~\ref{asm4} is satisfied. Example~\ref{example1} highlights that it is possible to drop the requirement of compact cross-covariance operators, which is imposed throughout the literature. Example~\ref{example2} is focused on incorporating dependence between the experimental units. In fact, this is more realistic than assuming independence for most real data settings.

\begin{Example}\label{example1}
    Consider a bounded sample $[Z_1^{(1)},Z_2^{(1)}],\ldots,[Z_1^{(n)},Z_2^{(n)}]$ of iid pairs of random variables. $Z_1^{(i)}$ and $Z_2^{(i)}$ take values in $\mathrm L^2$, and $\|Z_1^{(i)}\|_{\mathrm L^2} \leq a_1$, $\|Z_1^{(i)}\|_{\mathrm L^2} \leq a_2$ holds for some fixed constants $a_1,a_2 >0$. Then, the covariance estimator
    \[\pmb{\widehat C}_{1,2} = \left(\frac 1n \sum_{i=1}^{n}Z_1^{(i)}{Z_2^{(i)}}^{\top}\right) - \left(\frac 1n \sum_{j=1}^n Z_1^{(j)}\right) \left(\frac 1n \sum_{\ell=1}^n{Z_2^{(\ell)}}\right)^{\top}\]
    is weakly $\sqrt n$-consistent. This can be seen by
\begin{align*}
    \|&\pmb{\widehat C}_{1,2} - \pmb{C}_{1,2}\|_{\mathrm{op}}\\
    &\leq \left\| \frac 1n \sum_{i=1}^{n}Z_1^{(i)}{Z_2^{(i)}}^{\top} - \mathbb E \left( Z_1^{(1)}{Z_2^{(1)}}^{\top}\right) \right\|_{\mathrm{op}} + \left\|\left(\frac 1n \sum_{j=1}^n Z_1^{(j)}\right) \left(\frac 1n \sum_{\ell=1}^n{Z_2^{(\ell)}} - \mathbb E \left( Z_2^{(1)}\right) \right)^{\top} \right\|_{\mathrm{op}}\\
    &\quad + \left\|\left(\frac 1n \sum_{j=1}^n Z_1^{(j)} -\mathbb E \left( Z_1^{(1)}\right) \right) \left( \mathbb E \left( Z_2^{(1)}\right) \right)^{\top} \right\|_{\mathrm{op}}\\
    &\leq \left\| \frac 1n \sum_{i=1}^{n}Z_1^{(i)}{Z_2^{(i)}}^{\top} - \mathbb E \left( Z_1^{(1)}{Z_2^{(1)}}^{\top}\right) \right\|_{\mathrm{op}} + a_1 \cdot \left\|_{\mathrm{op}} \left(\frac 1n \sum_{\ell=1}^n{Z_2^{(\ell)}} - \mathbb E \left( Z_2^{(1)}\right) \right)^{\top} \right\|_{\mathrm L^2}\\
    &\quad + a_2 \cdot \left\| \left(\frac 1n \sum_{j=1}^n{Z_1^{(j)}} - \mathbb E \left( Z_1^{(1)}\right) \right)^{\top} \right\|_{\mathrm L^2},
\end{align*}
    because the spectral norm of a dyadic product equals the product of the $\mathrm L^2$-norms of the vectors.
    For the first term on the right hand side, applying McDiarmid's inequality, respectively its generalization to random elements in Banach spaces \cite{Pinelis94}, implies weak $\sqrt n$-consistency, as the mean satisfies the bounded differences property (in the spectral norm) with upper bound
    \[\frac 1n\|Z_1^{(i)}{Z_2^{(i)}}^{\top}-\widetilde Z_1^{(i)}{{\widetilde Z}_2^{(i)^{\top}}}\|_{\mathrm{op}} \leq \frac{2a_1a_2}{n}.\] This is because the spectral norm of a dyadic product equals the product of the $\mathrm L^2$-norms of the vectors. Thus, for any $\epsilon>0$ McDiarmid's inequality entails
    \[\mathbb P\left(\left\| \frac 1n \sum_{i=1}^{n}Z_1^{(i)}{Z_2^{(i)}}^{\top} - \mathbb E \left( Z_1^{(1)}{Z_2^{(1)}}^{\top}\right) \right\|_{\mathrm{op}}>\epsilon\right) \leq \exp \left(-\frac{n\epsilon^2}{2a_1^2a_2^2} \right),\] and the layer cake representation of the expected value then gives that
    \[\mathbb E \left(\left\| \frac 1n \sum_{i=1}^{n}Z_1^{(i)}{Z_2^{(i)}}^{\top} - \mathbb E \left( Z_1^{(1)}{Z_2^{(1)}}^{\top}\right) \right\|_{\mathrm{op}}>\epsilon\right)= O\left(\frac{1}{\sqrt n}\right).\]
    Applying a similar argument with respect to the $\mathrm L^2$-norm to the other summands entails the same rate for them. Hence, $\mathbb E(\|\pmb{\widehat C}_{1,2} - \pmb{C}_{1,2}\|_{\mathrm{op}}) = O(1/\sqrt n)$.
    Thus, it is possible to not assume compactness of the cross-covariance operators $\pmb C_{i,j}$ if the sample, respectively, the embeddings of the sample are bounded. Non-compact cross-covariance operators can actually arise from very simple dependence structures, e.g.\ $Z_1^{(i)} = Z_2^{(i)}$, $i \in \{1,\ldots,N\}$. The latter is a very trivial example in the context of CCA, but useful to point out that there exist more complex dependence structures that cannot be described by compact cross-covariance operators. This was observed in \cite{Fukumizu2007}, who provided a sufficient condition for a cross-covariance operator to be Hilbert-Schmidt. However, this condition is not satisfied in general.
\end{Example}

\begin{Example}\label{example2}
    Consider a bounded sequence $[Z_1^{(1)},Z_2^{(1)}],\ldots,[Z_1^{(n)},Z_2^{(n)}]$ of identically distributed $\mathrm L^4$-$m$-approximable pairs of random variables, whose cross-covariance operator is Hilbert-\linebreak Schmidt. Then, Assumption~\ref{asm4} holds, as shown in Theorem~3 of \cite{KR2013}. The reasoning in the latter is for ordinary covariance operators in Hilbert spaces, but carries over to cross-covariance operators.
\end{Example}

Based on the above assumptions, we can now state the main result of this section, which provides consistency rates for the outputs of multiple kernel CCA. In particular, it  entails a joint consistency rate for the optimal transformations $[\pmb u_1^*,\ldots,\pmb u_L^*]$ that are required in \eqref{e:transf} to create plots as in Figure~\ref{fig:cca3}. The operator $\mathfrak C_{L}$ belongs to the population version of the CCA-related eigenvalue problem \eqref{eq:CovOpMat}, see also \eqref{e:gen_eigenproblem} for its detailed representation.
\begin{Theorem}\label{t:main}
        Let Assumptions \ref{asm1}-\ref{asm4} hold. Furthermore, we assume that $\mathfrak C_{L}$ is closed, bounded and has isolated eigenvalues of multiplicity one. Let $\widehat \rho_n$ be the estimator, arising from \eqref{e:CCAproblem_short} for an eigenvalue $\rho$ of $\mathfrak C_{L}$ and $\widehat{\mathfrak{f}}$  the corresponding normalized eigenvector estimator for $\mathfrak f$. Then,
        \begin{align*}
            |\widehat \rho_n - \rho| &= O_P(\max\{\epsilon_n,\epsilon_n^{-\frac 32} \cdot n^{-\frac 12}\}),\\
            \|\widehat{\mathfrak f}_n - \mathfrak f \|&=  O_P(\max\{\epsilon_n,\epsilon_n^{-\frac 32} \cdot n^{-\frac 12}\}).
        \end{align*}
\end{Theorem}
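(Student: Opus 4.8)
The plan is to reduce the theorem to a single operator-norm bound on the perturbation $\widehat{\mathfrak C}_{n,L} - \mathfrak C_{L}$ and then invoke perturbation theory for self-adjoint operators. First I would record that both $\mathfrak C_{L}$ and its regularized empirical counterpart $\widehat{\mathfrak C}_{n,L}$ are self-adjoint: the $(i,j)$ block of $\mathfrak C_{L}$ is $\pmb C_{i,i}^{-1/2}\pmb C_{i,j}\pmb C_{j,j}^{-1/2}$ and, since $\pmb C_{j,i} = \pmb C_{i,j}^{*}$, the $(j,i)$ block is exactly its adjoint (and likewise for the regularized estimators, using $\widehat{\pmb C}_{j,i}=\widehat{\pmb C}_{i,j}^{*}$). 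Given self-adjointness together with the hypothesis that $\rho$ is an isolated eigenvalue of multiplicity one, with spectral gap $\delta>0$, Weyl's inequality yields $|\widehat\rho_n - \rho|\le \|\widehat{\mathfrak C}_{n,L}-\mathfrak C_{L}\|$ and the Davis--Kahan $\sin\Theta$ theorem yields $\|\widehat{\mathfrak f}_n-\mathfrak f\|\le (C/\delta)\,\|\widehat{\mathfrak C}_{n,L}-\mathfrak C_{L}\|$ after fixing the sign of $\widehat{\mathfrak f}_n$. Both stated rates then follow at once from the single operator-norm bound $\|\widehat{\mathfrak C}_{n,L}-\mathfrak C_{L}\| = O_P(\max\{\epsilon_n,\epsilon_n^{-3/2}n^{-1/2}\})$.

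The core of the proof is therefore this operator-norm bound, which I would split through the population regularized operator $\mathfrak C_{L}^{\epsilon_n}$, the block operator with blocks $(\pmb C_{i,i}+\epsilon_n\pmb I)^{-1/2}\pmb C_{i,j}(\pmb C_{j,j}+\epsilon_n\pmb I)^{-1/2}$: write $\widehat{\mathfrak C}_{n,L}-\mathfrak C_{L} = (\widehat{\mathfrak C}_{n,L}-\mathfrak C_{L}^{\epsilon_n}) + (\mathfrak C_{L}^{\epsilon_n}-\mathfrak C_{L})$ and bound the stochastic and the bias term separately, block by block. For the bias term, factoring $\pmb C_{i,j} = \pmb C_{i,i}^{1/2}V_{i,j}\pmb C_{j,j}^{1/2}$ through the correlation operator $V_{i,j}$ with $\|V_{i,j}\|\le 1$ reduces each block to $P_i V_{i,j}P_j - V_{i,j}$, where $P_i = \pmb C_{i,i}^{1/2}(\pmb C_{i,i}+\epsilon_n\pmb I)^{-1/2}$; here Assumption~\ref{asm2}, which bounds $\pmb C_{i,i}$ away from zero, gives $\|P_i-\pmb I\| = O(\epsilon_n)$ and hence a bias of order $O(\epsilon_n)$. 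For the stochastic term I would telescope each block into three differences and bound them using Assumption~\ref{asm4}, which supplies $\|\widehat{\pmb C}_{i,j}-\pmb C_{i,j}\|_{\mathrm{op}}=O_P(n^{-1/2})$ for all $i,j$, the diagonal included.

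The main obstacle is controlling the perturbation of the regularized inverse square roots without inflating the rate. The tool I would use is the integral representation $X^{-1/2} = \pi^{-1}\int_0^\infty (X+\lambda\pmb I)^{-1}\lambda^{-1/2}\,\dint\lambda$, which makes the estimate robust to the non-commutativity of $\widehat{\pmb C}_{i,i}$ and $\pmb C_{i,i}$: applying the first resolvent identity under the integral and using that both arguments exceed $\epsilon_n\pmb I$ gives $\|(\widehat{\pmb C}_{i,i}+\epsilon_n\pmb I)^{-1/2}-(\pmb C_{i,i}+\epsilon_n\pmb I)^{-1/2}\| \le \tfrac12\,\epsilon_n^{-3/2}\|\widehat{\pmb C}_{i,i}-\pmb C_{i,i}\| = O_P(\epsilon_n^{-3/2}n^{-1/2})$. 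The delicate point is that a naive bound on the accompanying factor by $\|\widehat{\pmb C}_{i,j}\|\,\epsilon_n^{-1/2}$ would produce a spurious $\epsilon_n^{-2}n^{-1/2}$ rate; this is avoided by again factoring through the empirical correlation operator to obtain $\|\widehat{\pmb C}_{i,j}(\widehat{\pmb C}_{j,j}+\epsilon_n\pmb I)^{-1/2}\| = O_P(1)$, so that only a single power of $\epsilon_n^{-3/2}$ survives. Collecting the three telescoped terms yields the stochastic rate $O_P(\epsilon_n^{-3/2}n^{-1/2})$, and combining it with the $O(\epsilon_n)$ bias completes the operator-norm bound; finally, Assumption~\ref{asm3} guarantees that both contributions vanish, so that $\widehat\rho_n$ and $\widehat{\mathfrak f}_n$ are consistent at the stated joint rate.
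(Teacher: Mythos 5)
Your proposal is correct, and its overall skeleton matches the paper's: both reduce the theorem to the operator-norm bound $\|\widehat{\mathfrak C}_{n,L}-\mathfrak C_{L}\|_{\textrm{op}}=O_P(\max\{\epsilon_n,\epsilon_n^{-3/2}n^{-1/2}\})$ and split this, block by block, into a stochastic term (empirical regularized minus population regularized) and a bias term (population regularized minus population). The differences lie in the tools used at each step. For the spectral perturbation step, the paper does not use self-adjointness at all: it invokes Chatelin's spectral approximation theory (norm convergence implies strong stability of the resolvent, then Proposition 4.1 of Chatelin yields a projection $\mathcal P$ with $|\widehat\rho_n-\rho|$ and $\|\widehat{\mathfrak f}_n-\mathfrak f\|$ both $O_P(\|(\widehat{\mathfrak C}_{n,L}-\mathfrak C_L)\mathcal P\|_{\textrm{op}})$), which is why the theorem only assumes $\mathfrak C_L$ closed and bounded with isolated simple eigenvalues. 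Your route via Weyl and Davis--Kahan requires the additional observation that $\mathfrak C_L$ and $\widehat{\mathfrak C}_{n,L}$ are self-adjoint (which you correctly verify from $\pmb C_{j,i}=\pmb C_{i,j}^*$), and is more elementary and self-contained once that is noted. For the stochastic term, the paper simply asserts that Lemma 6 of Fukumizu et al.\ ``also holds for non-iid data'' under Assumption~\ref{asm4} and cites the resulting rate $O_P(\epsilon_n^{-3/2}n^{-1/2})$; your integral-representation argument for $\|(\widehat{\pmb C}_{i,i}+\epsilon_n\pmb I)^{-1/2}-(\pmb C_{i,i}+\epsilon_n\pmb I)^{-1/2}\|\le\tfrac12\epsilon_n^{-3/2}\|\widehat{\pmb C}_{i,i}-\pmb C_{i,i}\|$, combined with the telescoping and the empirical correlation-operator factorization to keep the companion factors $O_P(1)$, actually proves this claim from Assumption~\ref{asm4} alone, which substantiates precisely the step the paper takes on faith. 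For the bias term, the paper uses the algebraic identity $\pmb A^{-1/2}-\pmb B^{-1/2}=\pmb A^{-1/2}(\pmb B^{3/2}-\pmb A^{3/2})\pmb B^{-3/2}+(\pmb A-\pmb B)\pmb B^{-3/2}$ together with Lemma 8 of Fukumizu et al., whereas you factor through $\pmb C_{i,j}=\pmb C_{i,i}^{1/2}V_{i,j}\pmb C_{j,j}^{1/2}$ and bound $\|P_i-\pmb I\|=O(\epsilon_n)$; both yield $O(\epsilon_n)$ and both implicitly read Assumption~\ref{asm2} as giving a bounded inverse of $\pmb C_{i,i}$ (strict positive definiteness alone does not suffice in infinite dimensions), so you share this gap with the paper rather than introduce a new one. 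In short: your proof trades the paper's reliance on external lemmas (Chatelin; Fukumizu Lemmas 6 and 8) for self-contained arguments, at the modest cost of needing self-adjointness, which holds here.
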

\begin{proof}
    If a sequence of bounded linear operators $(T_n)_{n \in \mathbb N}$ converges in the operator norm $\|\cdot\|_{\textrm{op}}$ to some bounded limit element $T$, i.e. $\|T_n-T\|_{\textrm{op}}\to 0$, this implies strong stability on the resolvent of $T$ \cite[Proposition 2.11]{Chatelin1981}. As all the eigenvalues of our considered covariance operators are isolated by assumption, Proposition 4.1 from \cite{Chatelin1981} entails that there exists a projection operator $\mathcal P$ such that \[|\widehat \rho_n - \rho| = O_P( \|(\widehat{\mathfrak C}_{n,L} - \mathfrak C_L) \mathcal P\|_{\textrm{op}}), \quad \|\widehat{\mathfrak f}_n - \mathfrak f \| =O_P( \|(\widehat{\mathfrak C}_{n,L} - \mathfrak C_L) \mathcal P\|_{\textrm{op}}).\]

    Since
    \begin{align*}
        \|(\widehat{\mathfrak C}_{n,L} - \mathfrak C_L) \mathcal P\|_{\textrm{op}} &\leq  \|\widehat{\mathfrak C}_{n,L} - \mathfrak C_L \|_{\textrm{op}}\\  &\leq \sum_{i,j=1,i\neq j}^L \|(\widehat{\pmb C}_{i,i} + \epsilon_n \cdot \pmb I)^{-1/2} \ \widehat{\pmb C}_{i,j} \ (\widehat{\pmb C}_{j,j} + \epsilon_n \cdot \pmb I)^{-1/2}-\pmb C_{i,i}^{-1/2} \ \pmb C_{i,j} \ \pmb C_{j,j}^{-1/2}\|_{\textrm{op}},
    \end{align*}
    it suffices to derive the consistency rate for the right hand side in order to obtain consistency rates for the eigenvector and eigenvalue estimators. Under Assumption~\ref{asm4}, Lemma 6 of \cite{Fukumizu2007} also holds for non-iid data $[X_1^{(1)},\ldots,X_L^{(1)}],\ldots,[X_1^{(n)},\ldots,X_L^{(n)}]$. Thus, whe have for $i \neq j$:
    \begin{multline}\label{e:lemma6fukumizu}
        \|(\widehat{\pmb C}_{i,i}+\epsilon_n\cdot \pmb I)^{-\frac 12} \ \widehat{\pmb C}_{i,j}  (\widehat{\pmb C}_{j,j}+\epsilon_n\cdot \pmb I)^{-\frac 12} - (\pmb C_{i,i}+\epsilon_n\cdot \pmb I)^{-\frac 12} \ \pmb C_{i,j}  (\pmb C_{j,j}+\epsilon_n\cdot \pmb I)^{-\frac 12}\|_{\textrm{op}}\\ = O_P \left(\epsilon_n^{-\frac 32} \cdot n^{-\frac 12}\right).
    \end{multline}
    Moreover, the upper bound
    \begin{align}\label{e:th_covop}
        &\|(\pmb C_{i,i}+\epsilon_n\cdot \pmb I)^{-\frac 12} \ \pmb C_{i,j}  (\pmb C_{j,j}+\epsilon_n\cdot \pmb I)^{-\frac 12} - \pmb C_{i,i}^{-\frac 12} \ \pmb C_{i,j}  \pmb C_{j,j}^{-\frac 12}\|_{\textrm{op}}\\
        &\leq \| ((\pmb C_{i,i}+\epsilon_n\cdot \pmb I)^{-\frac 12} - \pmb C_{i,i}^{-\frac 12}) \ \pmb C_{i,j} \ (\pmb C_{j,j}+\epsilon_n\cdot \pmb I)^{-\frac 12} \|_{\textrm{op}} +  \| \pmb C_{i,i}^{-\frac 12} \ \pmb C_{i,j} \ ((\pmb C_{j,j}+\epsilon_n\cdot \pmb I)^{-\frac 12} - \pmb C_{j,j}^{-\frac 12})\|_{\textrm{op}}\nonumber\\
        &= O(\max\{\|(\pmb C_{j,j}+\epsilon_n\cdot \pmb I)^{-\frac 12} - \pmb C_{i,i}^{-\frac 12}\|_{\textrm{op}},\|(\pmb C_{j,j}+\epsilon_n\cdot \pmb I)^{-\frac 12} - \pmb C_{j,j}^{-\frac 12}\|_{\textrm{op}} \}). \nonumber
    \end{align}
    holds, because the covariance operators are bounded and invertible, which implies boundedness of their inverse square root. By setting $\pmb A = \pmb C_{i,i}+\epsilon_n \cdot \pmb I$ and $\pmb B = \pmb C_{i,i}$ in
    \[\|\pmb A^{-\frac 12} - \pmb B^{-\frac 12}\|_{\textrm{op}} = \|\pmb A^{-\frac 12} \ (\pmb B^{\frac 32} - \pmb A^{\frac 32} ) \ \pmb B^{-\frac 32} + (\pmb A - \pmb B) \ \pmb B^{-\frac 32} \|_{\textrm{op}},\] and combining it with the fact from \cite[Lemma 8]{Fukumizu2007} that there exists a constant $\lambda>0$ for the choice of $\pmb A$ and $\pmb B$ such that \[\|\pmb A^{\frac 32} - \pmb B^{\frac 32}\|_{\textrm{op}} \leq 3\lambda^\frac{1}{2}\ \|\pmb A - \pmb B\|_{\textrm{op}},\] we obtain that the term on the right hand side of \eqref{e:th_covop} is $O(\epsilon_n)$. This concludes the proof.
\end{proof}

\begin{Remark}
    While \cite{Fukumizu2007} prove consistency of the estimators $\widehat \rho_n$ and $\widehat{\mathfrak f}_n$, but do not provide rates for them, \cite{Zhou20} give rates that directly rely on the decay of the eigenvalues of the underlying covariance operators, which is usually unknown. Their assumption on the decay requires compactness of the underlying covariance operators, which is not assumed in this paper. The results in \cite{Hsing15} do not cover regularized estimators, which are required in the sample case.
\end{Remark}

\begin{Remark}
    The number of features $L$ can also be allowed to slowly grow with the sample size $n$. In this case, an additional factor $L^2$ appears in the consistency rates. Nevertheless, this only leads to a convergence statement of the form $\|\widehat{\mathfrak C}_{n,L} - \mathfrak C_L\|\overset{P}{\to} 0 $ in which the operator norm also depends on $L$. Without defining a proper limit element of $\mathfrak C_L$, the interpretation of such a mathematical result is unclear.
\end{Remark}

\section{Multiple functional CCA} \label{sec:fcca}
Instead of embedding the matrix-valued observations $\pmb A_l[1],\ldots,\pmb A_l[n]$, $l \in \{1,\ldots,L \}$,  from our motivating example in Section~\ref{sec:intro} into an RKHS, we consider a different approach in this section. We transform each data block $\pmb A_l[i]$ into a function $\pmb Y_l[i]$, i.e.\ we are smoothing every block in order to lift it into a function space, before performing CCA with these transformed data.

Many smoothing procedures are well-known, see e.g.\ \cite{Gorecki2018, Ramsay2005}. We assume our smoothed observations to be of the form $\pmb Y_l[i]=(Y_{l,1}^{(i)},\ldots ,Y_{l,p_l}^{(i)})^{\top}$, where
each $Y_{l,j}^{(i)}$ is a function in $\mathrm \mathrm L^2(I)$, and $I$ is an interval. For the sake of lucidity, we only use the notation $\pmb Y_l[i](t)=(Y_{l,1}^{(i)}(t),\ldots ,Y_{l,p_l}^{(i)}(t))^{\top}$, when pointwise evaluations are required.

The population model of a smoothed block is
\begin{equation}\label{4}
Y_{l,j}(t)=\sum_{b=1}^{B_{l,j}}c_{l,j,b} \cdot \varphi_{l,j,b}(t), \ \ \ l \in \{ 1, 2,
\ldots, L\}, \ j \in \{1,2,\ldots, p_l\},
\end{equation}
where the $\varphi_{l,j,1}(t),\ldots,\varphi_{l,j,B_{l,j}}(t)$, $j \in \{1,\ldots, p_l\}$, are known orthonormal basis functions and
$c_{l,j,1},\ldots,c_{l,j,B_{l,j}}$ are unknown random coefficients.
We can rewrite  \eqref{4} as
\begin{equation}\label{5}
\pmb{Y}_l(t)=\pmb{\Phi}_l(t) \cdot \pmb{\mathfrak c}_l,
\end{equation}
where
\begin{equation*}
\pmb{\Phi}_l(t)=
  \begin{pmatrix}
    \pmb{\varphi}^{\top}_{l1}(t) & \pmb{0} & \ldots & \pmb{0} \\
    \pmb{0} & \pmb{\varphi}^{\top}_{l2}(t) & \ldots & \pmb{0} \\
    \ldots & \ldots & \ldots & \ldots \\
    \pmb{0}& \pmb{0} & \ldots & \pmb{\varphi}^{\top}_{lp_l}(t) \\
  \end{pmatrix},
\end{equation*}
$\pmb{\varphi}_{lj}(t)=(\varphi_{l,j,1}(t),\ldots,
\varphi_{{l,j,B_{l,j}}}(t))^{\top}$,
$\pmb{\mathfrak{c}}_l = (c_{l,1,1},\ldots,c_{l,1,B_{l,1}},
\ldots,c_{l,p_l,1},\ldots,c_{l,p_l,B_{l,p_l}})^{\top}$, $t\in I$, $j \in \{1,\ldots,p_l\}$, $l\in \{ 1,2,\ldots, L\}$.

For the sample case, we replace $\mathfrak c_l$ in \eqref{5} by the least-squares estimators $\widehat{\mathfrak c}_l[i]$ such that each block $\pmb A_l[i]$ is encoded by an estimate $\widehat{\mathfrak c}_l[i]$.

In the population case, functional canonical variables are defined as
\[
U_l = \langle \pmb u_l,\pmb Y_l \rangle
=\int_I\pmb u_l^\top(t)\pmb Y_l(t)dt, \ \ \ l \in \{ 1,\ldots, L \},
\]
where $\pmb u_l$ is the vector weight function that lives in the same function space as $\pmb Y_l$, i.e.\ it has a representation
\begin{align}\label{6}
	\pmb u_l(t)=\pmb\Phi_l(t) \cdot \pmb w_l.
\end{align}
In the competitiveness
index example, $l$ represents a feature, e.g., the institutions, so $U_l$
is a weighted average of the indicators in that feature, averaged over time.
The value of $U_l$ depends on the country.

One of the objectives of our analysis is to obtain the mappings
\begin{equation}\label{e:u-est}
    U^{(l)}(\cdot) = \langle {\pmb u}_l^*,\cdot \rangle,
\end{equation}
and to evaluate them for each datum in the sample and attempt to cluster subjects by their values. This is typically done by drawing a scatterplot on
the $(U^{(i)}, U^{(j)})$ plane. Subjects with similar similar $(U^{(i)}, U^{(j)})$ values
are considered as sharing important features.

As $\pmb Y_l$ and $\pmb u_l$ have representations as in \eqref{5} and \eqref{6}, we can write
\begin{equation}\label{e:U}
    U_l = \langle \pmb u_l, \pmb Y_l\rangle
= \pmb w_l^\top \pmb{\mathfrak{c}}_l, \quad l\in \{1,\ldots,L\}.
\end{equation}
Thus, the form of the functional canonical variable corresponding to the random
process $\pmb Y_l$ is determined by the vectors $\pmb{\mathfrak{c}}_l$
and $\pmb w_l$.

In the population case, multiple functional canonical correlation analysis can be presented as the following optimization problem:
\begin{align} \label{e:P3}
\rho = \max_{\pmb u_1,\ldots,\pmb u_L}\sum_{i=1}^L
\sum_{j \in \{1,\ldots,L\}\setminus \{i\}}\Cov(U_i, U_j) \ \
	\text{subject to  } \ \  \sum_{i=1}^L \Var(U_i) = L,
\end{align}
where \[(\pmb u_1^*,\ldots,\pmb u_L^*) = \arg\max_{\pmb u_1,\ldots,\pmb u_L}\sum_{i=1}^L
\sum_{j \in \{1,\ldots,L\}\setminus \{i\}}\Cov(U_i, U_j) \ \
	\text{subject to  } \ \  \sum_{i=1}^L \Var(U_i) = L, \]
By the definition of the $U_i$ in \eqref{e:U}, it is easy to see that this problem is equivalent to
\begin{align*}
\rho = \max_{\pmb u_1,\ldots,\pmb u_L}\sum_{i=1}^L
\sum_{j \in \{1,\ldots,L\}\setminus \{i\}}\Cov(\langle \pmb u_i, \pmb Y_i \rangle, \langle \pmb u_j, \pmb Y_j \rangle) \ \
	\text{subject to  } \ \  \sum_{i=1}^L \Var(\langle \pmb u_i, \pmb Y_i \rangle) = L,
\end{align*}
and thus can be written as in \eqref{e:optimization_problem} by using covariance operators the $\pmb C_{i,j}$, $i,j \in \{1,\ldots,L\}$.
We call the coefficient $\rho$  the canonical functional correlation coefficient,
but we emphasize that our definition is valid for $L>2$.
Previous work, e.g.\ \cite{Leurgans1993}, studied only the case of $L=2$. In the competitiveness
example, $L>2$ allows us to study more than two features of competitiveness.

Similar to the multiple kernel CCA in Section~\ref{sec:kcca}, in the sample case, we replace $\Cov(\langle \pmb u_i, \pmb Y_i \rangle,$ $\langle \pmb u_j, \pmb Y_j \rangle)$ by an estimator $\langle \pmb u_i, \pmb{\widehat{C}}_{i,j} \ \pmb u_j \rangle$, and $\Var(\langle \pmb u_i, \pmb Y_i \rangle)$ by a regularized estimator $\langle \pmb u_i, (\pmb{\widehat{C}}_{i,i}+ \epsilon_n \cdot \pmb I) \ \pmb u_i \rangle$. Using the standard sample covariance estimator $\pmb{\widehat{C}}_{i,j}$ and \eqref{e:U}, it is easy to see that

\begin{equation*}
    \langle \pmb u_i, \pmb{\widehat{C}}_{i,j} \ \pmb u_j \rangle = \langle \pmb w_i, \widehat{\pmb{\mathcal C}}_{i,j} \ \pmb w_j \rangle,
\end{equation*}
with \[\widehat{\pmb{\mathcal C}}_{i,j}=\frac{1}{n-1}\sum_{k=1}^n \widehat{\mathfrak c}_i[k] \widehat{\mathfrak c}_j[k]^{\top} - \frac{n}{n-1} \left(\frac 1n \sum_{l=1}^n \widehat{\mathfrak c}_i[k] \right) \cdot \left(\frac 1n \sum_{\ell=1}^n \widehat{\mathfrak c}_j[\ell] \right)^{\top}.\]
Similarly, for the variance, it holds
\begin{equation*}
    \langle \pmb u_i, (\pmb{\widehat{C}}_{i,i} + \epsilon_n \cdot \pmb I) \ \pmb u_i \rangle = \left\langle \pmb w_i, \left(\widehat{\pmb{\mathcal C}}_{i,i}+ \epsilon_n \cdot \pmb I\right) \ \pmb w_i \right\rangle.
\end{equation*}

Thus, the sample version of the multiple CCA problem for functional data is
\begin{align*}
	\widehat\rho = \max_{\pmb w_1,\ldots,\pmb w_L}\sum_{i=1}^L \sum_{j \in \{1,\ldots,L\}\setminus \{i\}}\pmb w_i^\top \widehat{\pmb{\mathcal C}}_{i,j}\pmb w_j \quad
	\text{subject to} \quad \sum_{i=1}^L\pmb w_i^\top(\widehat{\pmb{\mathcal C}}_{i,i}+\epsilon\cdot \pmb I)\pmb w_i = L,
\end{align*}
and the arg max of this optimization problem gives estimators for \eqref{e:u-est}.
As in Section~\ref{sec:kcca}, the above problem
can be reduced to the following generalized eigenvalue problem
whose solution provides the weights $\pmb w_l$:
\begin{multline}\label{eq:geigen3}
\begin{pmatrix}
  \pmb 0 & \widehat{\pmb{\mathcal C}}_{1,2} & \widehat{\pmb{\mathcal C}}_{1,3} & \ldots & \widehat{\pmb{\mathcal C}}_{1,L}\\
  \widehat{\pmb{\mathcal C}}_{2,1} & \pmb 0 & \widehat{\pmb{\mathcal C}}_{2,3} & \ldots & \widehat{\pmb{\mathcal C}}_{2,L}\\
  \ldots & \ldots & \ldots & \ldots & \ldots\\
  \widehat{\pmb{\mathcal C}}_{L,1} & \widehat{\pmb{\mathcal C}}_{L,2} & \widehat{\pmb{\mathcal C}}_{L,3} & \ldots & \pmb 0
\end{pmatrix}
\begin{pmatrix}
\pmb w_1\\
\pmb w_2\\
\ldots\\
\pmb w_L
\end{pmatrix}\\
= \rho
\begin{pmatrix}
  \widehat{\pmb{\mathcal C}}_{1,1} + \varepsilon\pmb I &\pmb 0 & \pmb 0 & \ldots & \pmb 0\\
  \pmb 0 & \widehat{\pmb{\mathcal C}}_{2,2} + \varepsilon\pmb I & \pmb 0 & \ldots & \pmb 0\\
  \ldots & \ldots & \ldots & \ldots & \pmb 0\\
  \pmb 0 & \pmb 0 & \pmb 0 & \ldots & \widehat{\pmb{\mathcal C}}_{L,L} + \varepsilon\pmb I\\
\end{pmatrix}
\begin{pmatrix}
\pmb w_1\\
\pmb w_2\\
\ldots\\
\pmb w_L
\end{pmatrix}.\nonumber
\end{multline}

\begin{Theorem}
    Let the random variables $Y_l[k]$ take values in a separable Hilbert space, have finite second moments, and Assumptions~\ref{asm2}-\ref{asm4} hold. Moreover, the conditions on the eigenvalues and operators imposed in Theorem~\ref{t:main} are satisfied. Then, the results of Theorem~\ref{t:main} directly carry over to multiple functional CCA described in this section.
\end{Theorem}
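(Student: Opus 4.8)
The plan is to recognize the multiple functional CCA problem as an instance of the very same abstract generalized eigenvalue problem \eqref{eq:CovOpMat}, with regularized sample version \eqref{e:CCAproblem_short}, that underlies Theorem~\ref{t:main}, and then to check that every hypothesis invoked in the proof of Theorem~\ref{t:main} has a direct counterpart in the functional setting. Concretely, the covariance and cross-covariance operators $\pmb C_{i,j}$ now act on the separable Hilbert space $\bigtimes_{l=1}^L (\mathrm{L}^2(I))^{p_l}$ in which the $\pmb Y_l$ live, rather than on an RKHS. First I would observe that, because the basis in \eqref{4} is orthonormal, the map $\pmb w_l \mapsto \pmb\Phi_l \pmb w_l$ is isometric and the identity \eqref{e:U} furnishes an isometry between $\langle \cdot,\cdot\rangle$ on the function space and the Euclidean inner product on the coefficient space. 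Consequently $\|\widehat{\pmb C}_{i,j} - \pmb C_{i,j}\|_{\textrm{op}} = \|\widehat{\pmb{\mathcal C}}_{i,j} - \pmb{\mathcal C}_{i,j}\|_{\textrm{op}}$, so the functional and coefficient formulations of the problem are unitarily equivalent and share the same eigenstructure.

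Next I would verify the assumptions. The finite-second-moment hypothesis on the $\pmb Y_l[k]$ plays exactly the role of Assumption~\ref{asm1}: it guarantees that the auto- and cross-covariance operators are well-defined and bounded, and that the quantities $\Var(U_l)$ and $\Cov(U_i,U_j)$ in \eqref{e:P3} are finite. Assumptions~\ref{asm2}--\ref{asm4} are imposed directly. Strict positive definiteness (Assumption~\ref{asm2}) permits the reduction of the generalized eigenvalue problem to the ordinary one, as in \eqref{e:eigenvalue_problem}; Assumption~\ref{asm3} controls the regularization; and Assumption~\ref{asm4} supplies $\|\widehat{\pmb C}_{i,j} - \pmb C_{i,j}\|_{\textrm{op}} = O_P(n^{-1/2})$ for each pair $(i,j)$. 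The standing hypothesis that the operator-and-eigenvalue conditions of Theorem~\ref{t:main} hold provides closedness, boundedness, and isolated simple eigenvalues of $\mathfrak C_L$.

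With these in hand, the argument of Theorem~\ref{t:main} transfers essentially verbatim. The perturbation results of \cite{Chatelin1981} are stated for bounded operators on an arbitrary separable Hilbert space, so strong resolvent stability and the projection bounds $|\widehat\rho_n-\rho| = O_P(\|(\widehat{\mathfrak C}_{n,L}-\mathfrak C_L)\mathcal P\|_{\textrm{op}})$ and $\|\widehat{\mathfrak f}_n - \mathfrak f\| = O_P(\|(\widehat{\mathfrak C}_{n,L}-\mathfrak C_L)\mathcal P\|_{\textrm{op}})$ continue to hold. It then remains to bound $\|\widehat{\mathfrak C}_{n,L}-\mathfrak C_L\|_{\textrm{op}}$ by the finite sum over pairs $(i,j)$ exactly as before. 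Lemma~6 of \cite{Fukumizu2007}, which only requires Assumption~\ref{asm4}, yields the $O_P(\epsilon_n^{-3/2} n^{-1/2})$ contribution \eqref{e:lemma6fukumizu}, while the bias estimate \eqref{e:th_covop}, resting on Lemma~8 of \cite{Fukumizu2007} and the bounded invertibility of the $\pmb C_{i,i}$, gives the $O(\epsilon_n)$ contribution. Summing over the finitely many pairs produces the same rate $O_P(\max\{\epsilon_n,\epsilon_n^{-3/2} n^{-1/2}\})$ for both $|\widehat\rho_n-\rho|$ and $\|\widehat{\mathfrak f}_n-\mathfrak f\|$.

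I expect the only genuine content beyond this bookkeeping to be the verification of Assumption~\ref{asm4} for the functional covariance estimators, that is, confirming that the $\sqrt n$-rate survives the least-squares reconstruction of the coefficient vectors $\widehat{\mathfrak c}_l[k]$ from the discrete blocks $\pmb A_l[k]$. When the $\pmb Y_l[k]$ are treated as the functional observations themselves, the bounded-sample and weak-dependence arguments of Examples~\ref{example1} and~\ref{example2} apply directly; the delicate point, which the theorem avoids by assuming Assumption~\ref{asm4} outright, is to show that the smoothing error is asymptotically negligible relative to $n^{-1/2}$, so that the consistency of the idealized operators transfers to the estimated ones.
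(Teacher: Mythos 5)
Your proposal is correct and matches the paper's intended argument: the paper in fact gives no explicit proof of this theorem, precisely because the proof of Theorem~\ref{t:main} is purely operator-theoretic (Chatelin's perturbation bounds plus \eqref{e:lemma6fukumizu} and \eqref{e:th_covop}) and applies verbatim once the functional problem is written via the covariance operators $\pmb C_{i,j}$ on the separable Hilbert space, with finite second moments standing in for Assumption~\ref{asm1} --- exactly the reduction you carry out, including the orthonormal-basis isometry to the coefficient representation $\widehat{\pmb{\mathcal C}}_{i,j}$. Your closing observation is also accurate: the least-squares smoothing error is not analyzed in the paper either, since Assumption~\ref{asm4} is imposed directly on the covariance estimators and thereby absorbs that issue.
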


\section{Illustrative examples}
\label{s:ex}

\subsection{Agriculture data set for Polish voivodeships}
\label{ss:ex1}

We use agricultural data for Polish voivodeships to demonstrate practical aspects of the described methodology.\footnote{\url{http://stat.gov.pl}} Figure~\ref{fig:nuts} presents the administrative division of Poland into macroregions and voivodeships. Macroregions are broadly equivalent to NUTS 1 units.\footnote{The NUTS classification is a territorial standard valid for the statistical division of the European Union member countries.} In comparison, voivodeships are equivalent to NUTS 2 units.

\begin{figure}[!tbh]
	\centering\includegraphics[width = 0.49\textwidth]{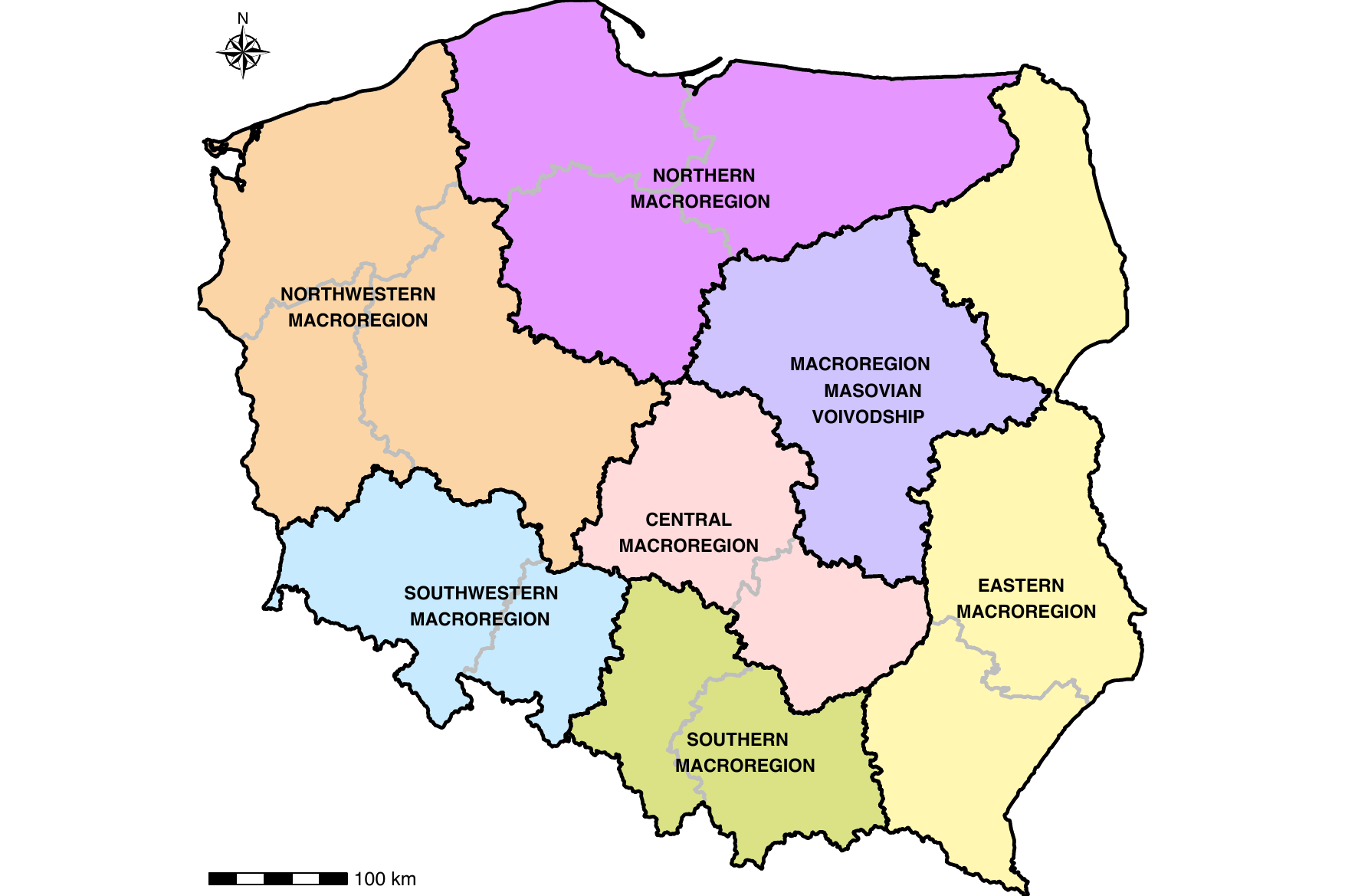}
	\centering\includegraphics[width = 0.49\textwidth]{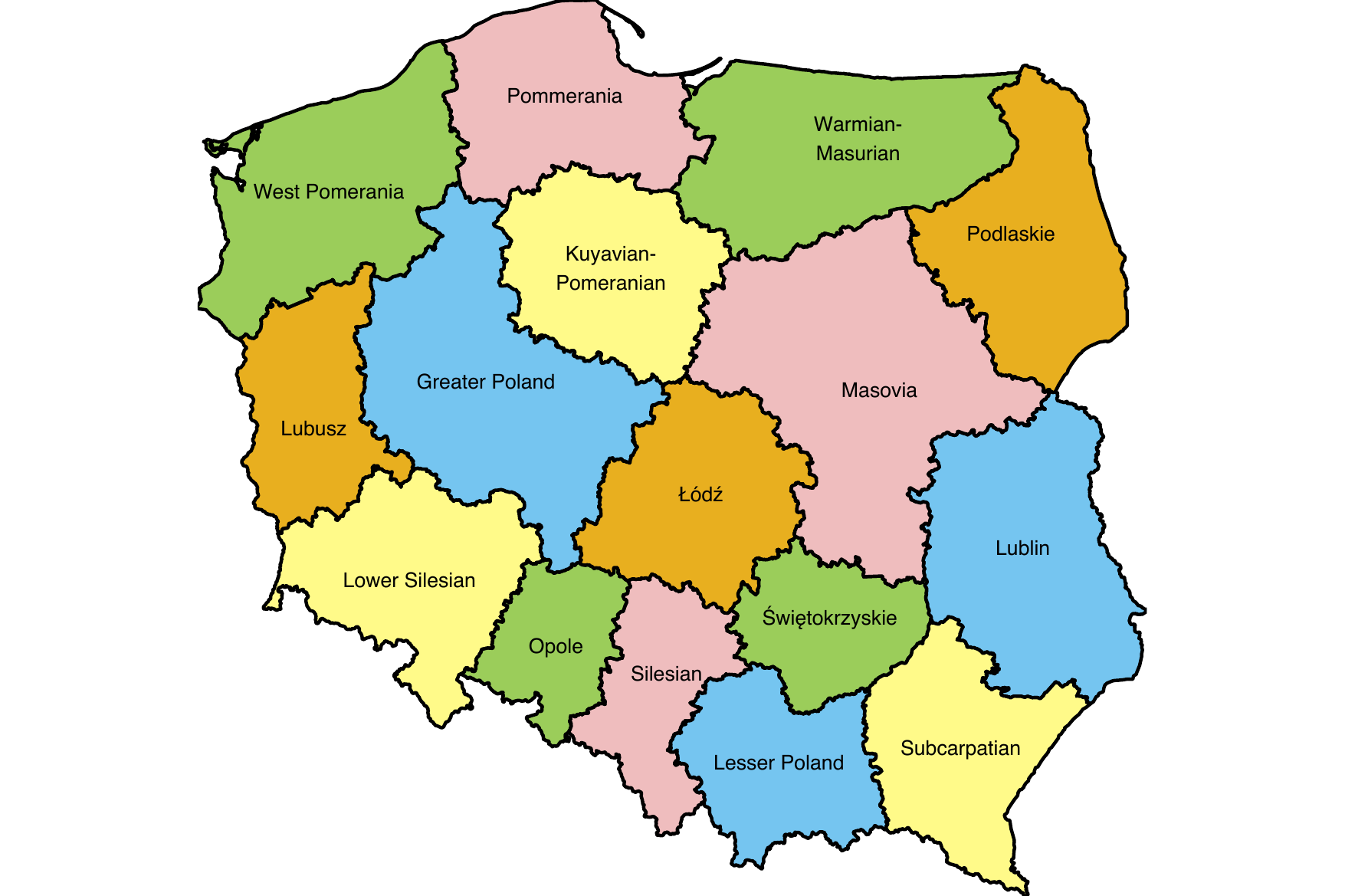}
	\caption{Macroregions (left) and voivodeships (right) in Poland (2018)}
\label{fig:nuts}
\end{figure}

The data consists of yields of thirty crops, expressed in quintals per hectare. These crops were recorded in 2003-2016 ($T = 14$ years) in $n = 16$ Polish voivodeships (administrative units). The analyzed plants are divided into $L=3$ groups:

	\begin{itemize}
		\item Group 1 -- cereals and root crops ($p_1 = 9$ variables): barley, buckwheat, millet, oat, potatoes, rye, sugar beet, triticale and wheat.
		\item Group 2 -- fodders ($p_2 = 6$ variables): clover, field crops, legume fodder, lucerne, root fodder, and serradella.
		\item Group 3 -- fruits and vegetables ($p_3 = 15$ variables): apples, cabbage, carrot, cauliflower, cherries, cucumbers, currants, gooseberry, onion, pears, plums, raspberries, sweet cherries, strawberries, and tomatoes.
	\end{itemize}
	
\subsubsection{Multivariate repeated measures data}
The input data are, therefore, the multivariate repeated measures data $[\pmb A_1[1],\pmb A_2[1],\pmb A_3[1]],\ldots,[\pmb A_1[16],\pmb A_2[16],\pmb A_3[16]]$, with $\pmb A_1[k]\in \mathbb R^{14 \times 9}$, $\pmb A_2[k]\in \mathbb R^{14 \times 6}$, $\pmb A_3[k]\in \mathbb R^{14 \times 15}$.

Based on these data, multiple kernel canonical correlation analysis
was performed (Table~\ref{tab:1}). The multivariate repeated measures
data for 16 voivodeships in the system of the first two multiple kernel canonical variables $(U^{(1)}, U^{(2)})$ are shown in Figure~\ref{fig:cca1}.

\begin{table}[!tbh]
  \caption{Top-3 biggest multiple canonical correlations for the Polish voivodeships data set}
  \begin{tabular}{ccc}
  \toprule
    \multirow{2}{*}{No.} & Multivariate repeated  &Multivariate\\
    &measures data ($\hat\rho$) & functional data ($\hat\rho_F$) \\
    \midrule
    1 & 0.45 & 0.58\\
    2 & 0.29 & 0.37\\
    3 & 0.25 & 0.34\\
    \bottomrule
  \end{tabular}
\label{tab:1}
\end{table}

It is important to note that the reported values are not classical correlation coefficients, but generalized canonical correlations defined by the optimization problems \eqref{e:sample_prob}, and \eqref{e:P3}. For $L=2$, they coincide with the usual canonical correlations and are bounded by one. For $L>2$, however, they quantify the overall strength of association across several sets and are not restricted to the unit interval. The values in Tables~\ref{tab:1} and \ref{tab:3} should therefore be interpreted as generalized correlation coefficients, which are always non–negative but may exceed one without indicating a normalization issue.

Figure~\ref{fig:cca1} shows that the voivodeships belonging to the
same macroregion are located close to one another on the $(U^{(1)}, U^{(2)})$ plot.
This is reasonable because the voivodeships belonging to a given macroregion have
similar temperature, rainfall, and sun exposure, influencing crop yields.
This shows that our algorithm produces beneficial results when some prior knowledge is available. It gives us confidence that its output will be helpful in situations where no prior information can be obtained. The only outlier voivodeship is Opole. This is unsurprising because this voivodeship is considered the best in Poland in terms of agricultural production. Its climate is characterized by hot summers, mild and short winters, early springs, and long, mild autumns. Moreover, 62\% of its area consists of rich brown and clay soils and productive soils in numerous lowland river valleys.  These conditions are ideal for cultivating cereals such as wheat, barley, rapeseed, and sugar beets. The agricultural valorization index\footnote{WWRPP reflects the potential of agricultural production space resulting from natural conditions. It is an integrated indicator that assesses individual habitat elements such as soil quality and suitability, soil water relations, terrain relief, and agroclimate.} (WWRPP) places this region as the best in Poland. The WWRPP for Opole is 81.6 points, compared to the national average of 66.6 points \cite{Stuczynski2007}, see Figure~\ref{fig:wwrpp}. The fact that our algorithm identified this outlier further validates its usefulness.

\begin{figure}[!tbh]
	\centering\includegraphics[width = 1\textwidth]{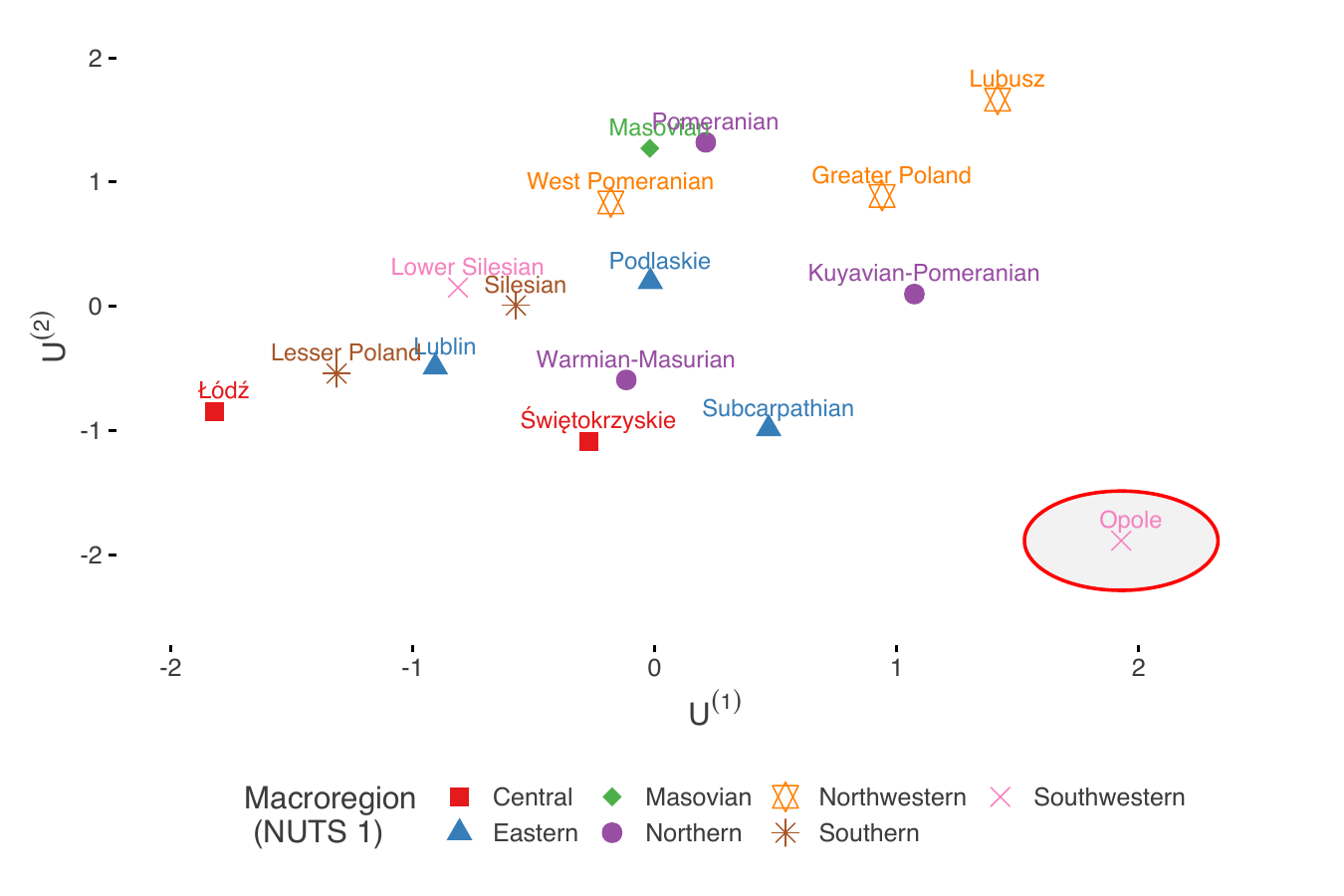}
	\caption{Scatterplot for the optimally transformed feature pairs in the agriculture data set (16 voivodeships and seven regions) in the system of the first two multiple kernel canonical variables $(U^{(1)}, U^{(2)})$. The optimal transformations were determined by multiple kernel CCA for repeated measures data, as described in Section~\ref{sec:kcca}.}
\label{fig:cca1}
\end{figure}

\begin{figure}[!tbh]
	\centering\includegraphics[width = 0.8\textwidth]{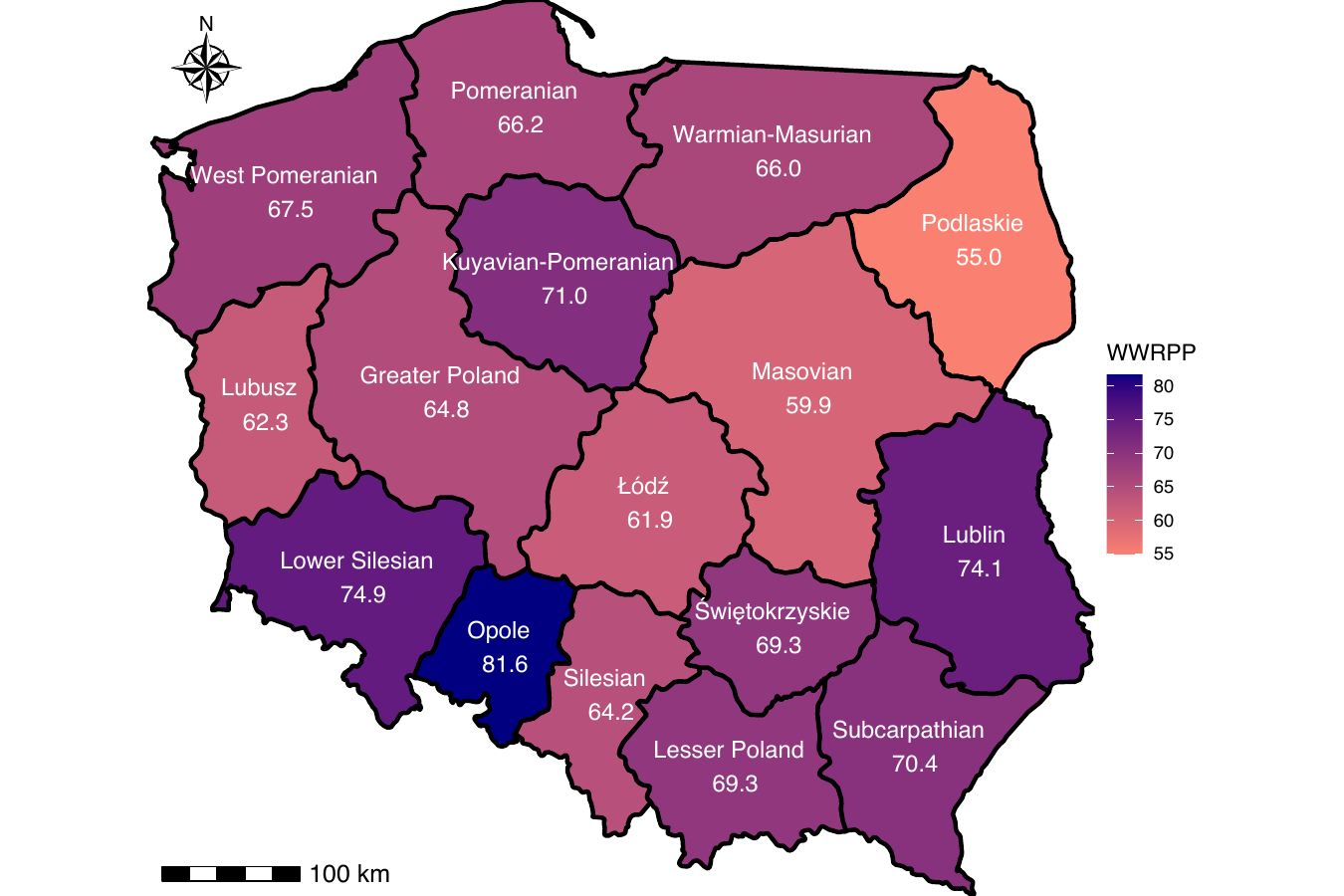}
	\caption{WWRPP index for Poland (country mean = 66.6 points)}
\label{fig:wwrpp}
\end{figure}

\subsubsection{Multivariate functional data}
We used a Fourier basis with nine components ($B_{ij} = 9,\ i \in \{ 1,\ldots, L\}, \ j \in \{ 1,\ldots,p_i\}$) to express the agriculture data as functional data. Using the transformed data, multiple functional canonical correlation analysis was performed (Table~\ref{tab:1}). The multivariate functional data for the 16 voivodeships in the system of the first two multiple functional canonical variables $(U^{(1)}, U^{(2)})$ are shown in Figure~\ref{fig:cca2}, which shows that in the system of multiple canonical variables, the voivodeships are grouped into compact clusters belonging to specific macroregions. Additionally, looking at canonical correlations (Table~\ref{tab:1}), we see that higher values were obtained for the multivariate functional approach. Thus, for these  data, multiple functional canonical variables
have stronger discrimination ability than the multiple kernel canonical variables.

\begin{figure}[!tbh]
	\centering\includegraphics[width = 1\textwidth]{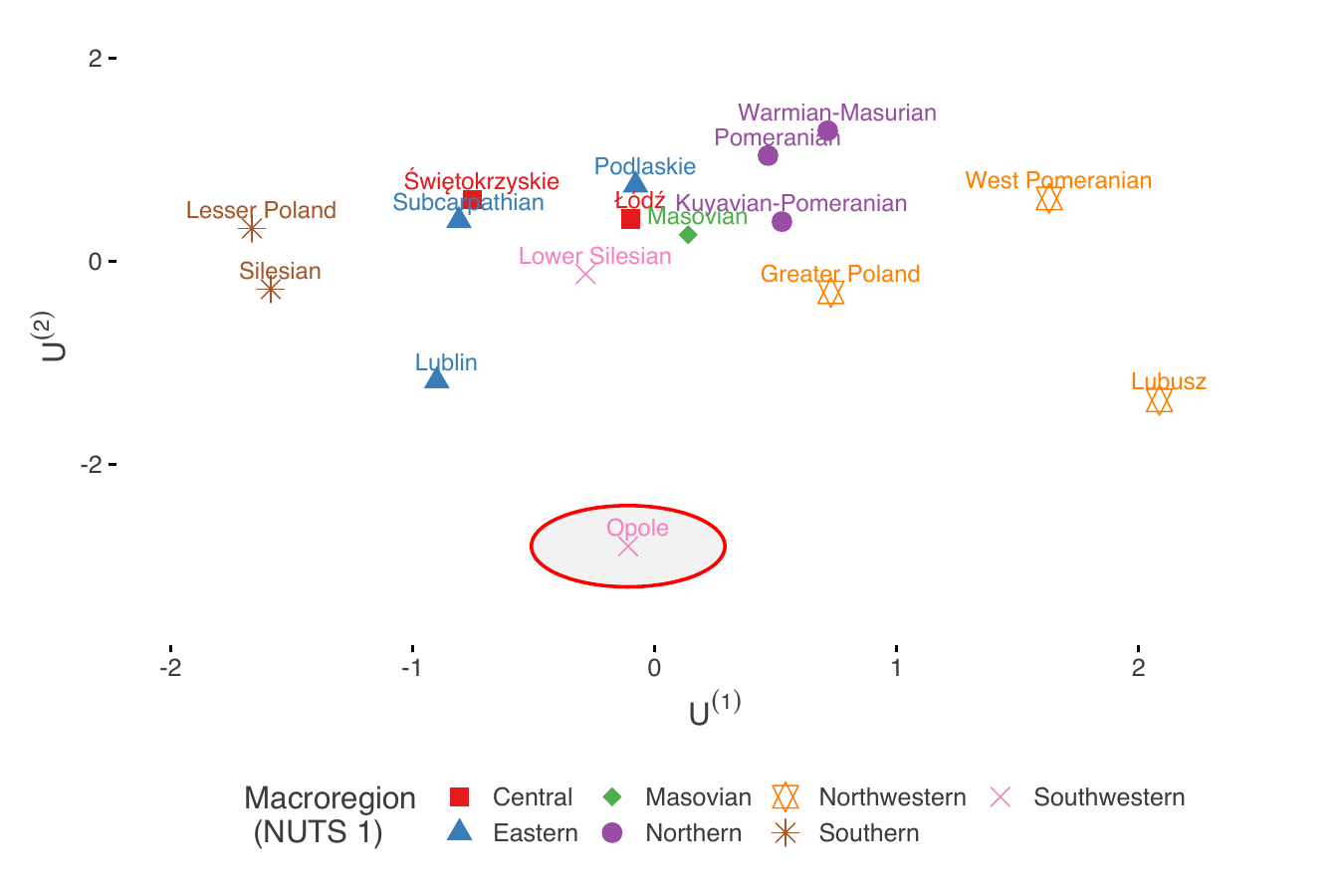}
	\caption{Scatterplot for the optimally transformed feature pairs in the agriculture data set (16 voivodeships and seven regions) in the system of the first two multiple functional canonical variables $(U^{(1)}, U^{(2)})$. The optimal transformations were determined by multiple functional CCA, as described in Section~\ref{sec:fcca}.}
\label{fig:cca2}
\end{figure}

\subsection{Global Competitiveness Index (GCI) data set} \label{ss:gci}
In the second example, we study
the relationships involving $n=115$ countries over $T=10$
years (2008--2017), based on $L=12$ features. For this purpose,
data published by the World Economic Forum (WEF) in its annual
reports\footnote{\url{http://www.weforum.org}} is used.
Established in 1979, the Global Competitiveness Report by the
WEF stands as the most enduring and thorough evaluation of the
factors influencing economic development. These are comprehensive data,
exhaustively describing various socio-economic conditions of countries for
which relevant data are available. Table~\ref{tab:2} describes the features
and many scalar indicators (a detailed description can be found in \cite{Gorecki2016})
in each feature employed in the analysis. WEF experts have divided the countries (115)
into five groups (Figure~\ref{fig:map}). These groups are not used in the analysis, but they illustrate the meaning of the Global Competitiveness Index and are used at the end of this section to validate the clustering implied by our analysis.

\begin{figure}[!tbh]
	\centering\includegraphics[width = 1\textwidth]{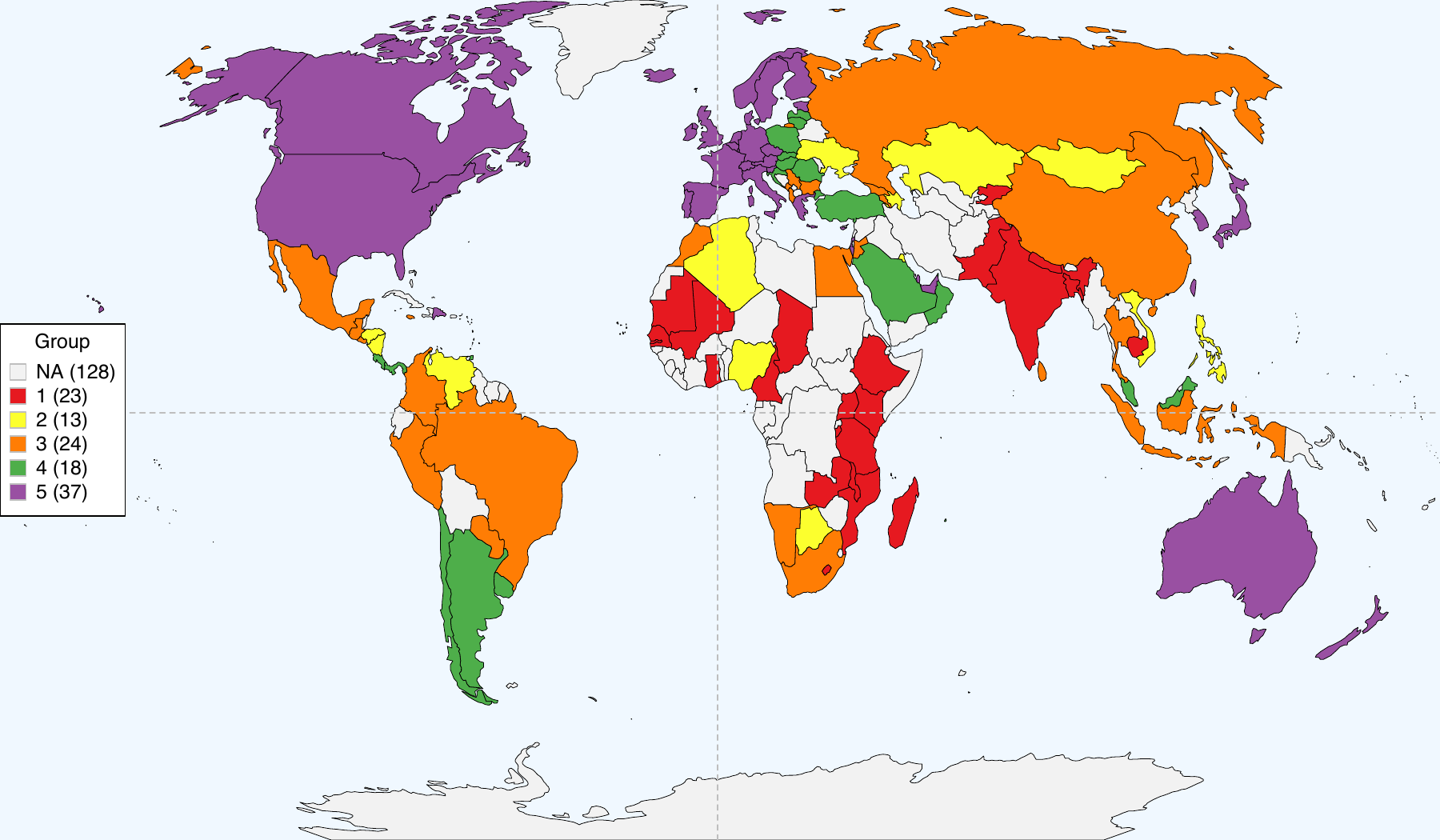}
	\caption{The 115 countries used in our analysis are
highlighted by colors. They are split into five groups by the value of the GCI.
The count of countries in each group is given in  parentheses.
(NA stands for missing data).}
\label{fig:map}
\end{figure}

\begin{table}[!tbh]
    \caption{The 12 features used in the analysis of the GCI data set}
    \begin{tabular}{clc}
    \toprule
    No. ($l$) & Feature & Number of variables, i.e.\ scalar indicators ($p_l$) \\ \midrule
    1. & Institutions & 16 \\
    2. & Infrastructure & 6 \\
    3. & Macroeconomic environment & 3 \\
    4. & Health and primary education & 7 \\
    5. & Higher education and training & 6 \\
    6. & Goods market efficiency & 10 \\
    7. & Labour market efficiency & 6 \\
    8. & Financial market development & 5 \\
    9. & Technological readiness & 4 \\
    10. & Market size & 4 \\
    11. & Business sophistication & 9 \\
    12. & Innovation & 5 \\
    \bottomrule
    \end{tabular}
    \label{tab:2}
\end{table}

We performed the multivariate and the functional  analyses, similarly as in
Section \ref{ss:ex1}. This time, we have $L = 12$ groups, $n = 115$ countries,
and $T = 10$ years. In the case of the MFCCA,
observations are converted to functions by utilizing the Fourier basis
with five basis functions ($B_{lj} = 5,\ l \in \{ 1,\ldots,L\}, \ j \in \{ 1,\ldots,p_l\}$).
Table~\ref{tab:3} displays the results of the two analyses.
It shows that, as in the previous example, the correlations are
slightly higher for the functional approach. The corresponding projections
on the plane can be found in Figures \ref{fig:cca3} and \ref{fig:cca4}.

\begin{figure}[!tbh]
	\centering\includegraphics[width = 1\textwidth]{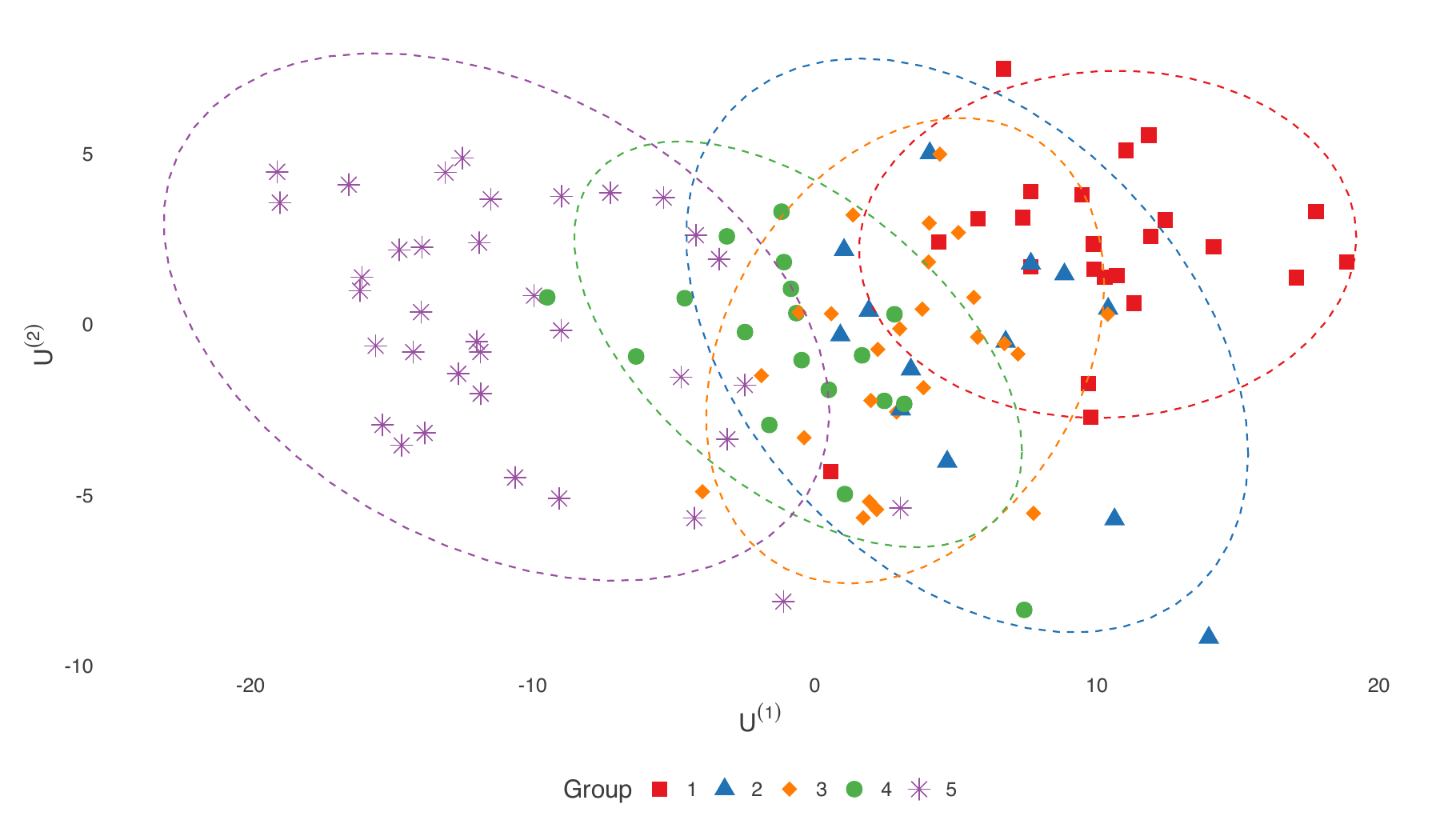}
	\caption{Scatterplot for the optimally transformed feature pairs in the GCI data set (115 countries and five groups) in the system of the first two multiple functional canonical variables $(U^{(1)}, U^{(2)})$ (with 95\% confidence normal ellipses). The optimal transformations were determined by multiple functional CCA, as described in Section~\ref{sec:fcca}.}
\label{fig:cca4}
\end{figure}

\begin{table}[!tbh]
  \caption{Top-3 biggest multiple canonical correlations for GCI data set}
  \begin{tabular}{ccc}
  \toprule
    \multirow{2}{*}{No.} & Multivariate repeated  &Multivariate\\
    &measures data ($\hat\rho$) & functional data ($\hat\rho_F$) \\
    \midrule
    1 & 0.74 & 0.76\\
    2 & 0.28 & 0.29\\
	3 & 0.11 & 0.11\\
    \bottomrule
  \end{tabular}
\label{tab:3}
\end{table}

How is the representation quality assessed, and how is the whole analysis? For this purpose, we decided to use the information that the data is divided into $5$ groups. To evaluate clusterability, we used the Hopkins statistic \cite{hopkins54, lawson90}. This statistic is employed to assess the clustering tendency of a data set. Let $X$ be the data set of $n$ points in the $ d$-dimensional space. We want to test the pair of
hypotheses:

\begin{description}
	\item[H0] The data set $X$ is uniformly distributed (no clusters).
	\item[H1] The data set $X$ is not uniformly distributed (clusters).
\end{description}

Denote by $C_X$ the smallest convex hull that contains $X$.
The Hopkins statistic is calculated with the following algorithm:
\begin{enumerate}
	\item Sample randomly one observation from the data set $X$ and set the counter to $i=1$.  Denote by $w_i$  the Euclidean distance from this observation to the nearest-neighbor observation in $X$.
	\item Generate one new point uniformly distributed in $C_X$
and denote by $u_i$ the Euclidean distance from this point to the nearest-neighbor observation in $X$.
	\item Repeat steps (1) and (2) $m\ll n$ times (typically $m \approx 0.1\cdot n$).
	\item Compute the Hopkins statistic
$$H = \frac{\sum\limits_{i=1}^m u_i^d}{\sum\limits_{i=1}^m (u_i^d + w_i^d)}.$$
\end{enumerate}
Under the null hypothesis, this statistic follows the  $\text{Beta}(m, m)$ distribution. If the data has little structure, the average distance between real points will be similar to that from a uniformly distributed random point to a real point, resulting in a Hopkins statistic value of approximately 0.5. Conversely, if the data are tightly clustered, the distances $w_i$ will be much smaller than those $u_i$, leading to a Hopkins statistic value close to 1.0. The interpretation of $H$ can be understood through the following guidelines \cite{lawson90, wright22}:
\begin{itemize}
\item Low values of $H$ suggest that the observations in $X$ are repelling each other.
\item Values of $H$ close to 0.5 indicate that the observations in $X$ are spatially random.
\item High values of $H$ suggest possible clustering of the observations in $X$.
\item Values of $H$ greater than 0.75 indicate a clustering tendency at the 90\% confidence level.
\end{itemize}
\begin{figure}[!h]
	\centering\includegraphics[width = 1\textwidth]{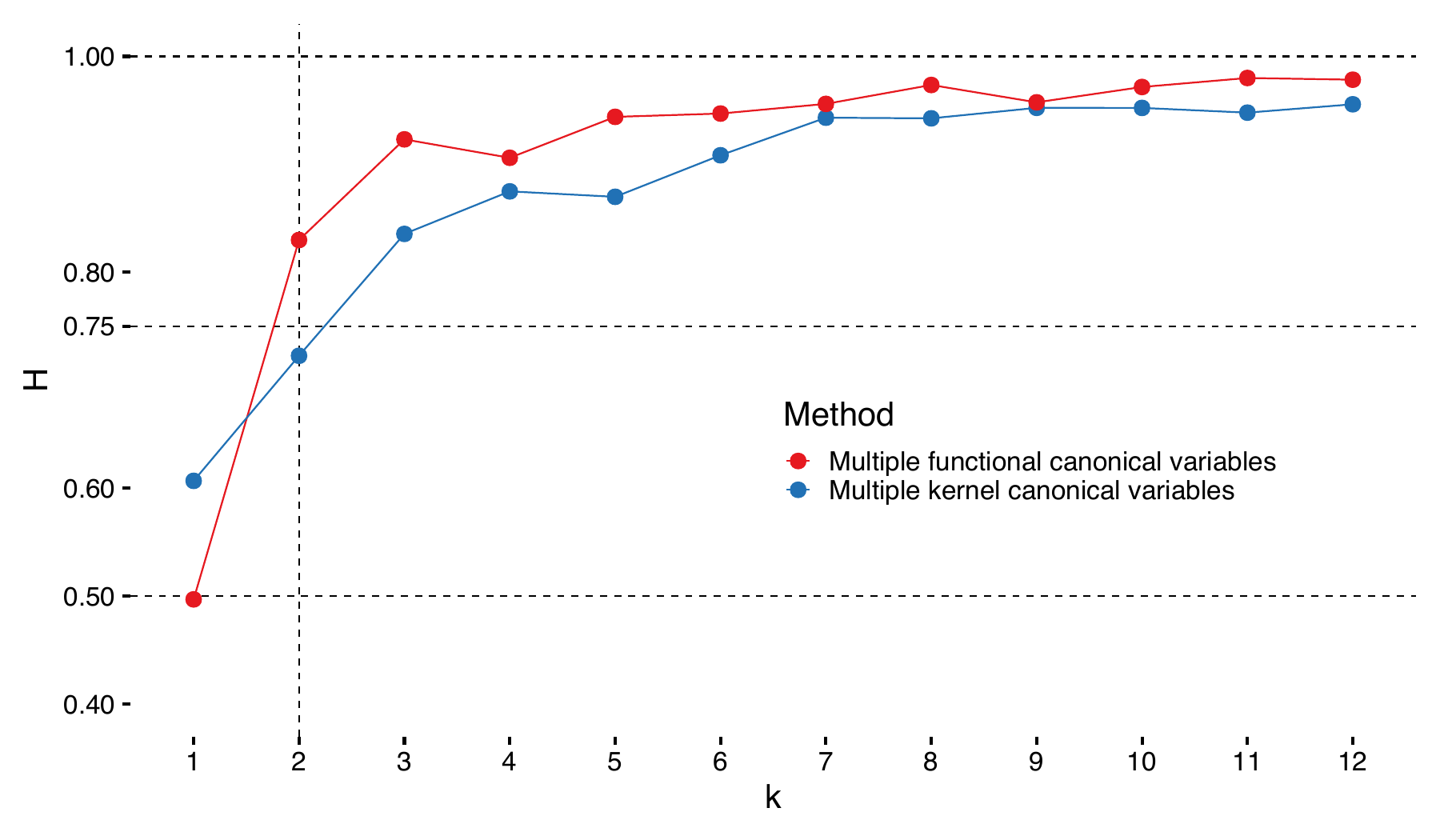}
	\caption{The Hopkins statistic for the GCI data ($m = \lceil115/10\rceil = 12$) vs the  number of components. The horizontal dashed lines indicate the
standard thresholds: $0.5$, $0.75$, and $1$. The
 vertical line at $2$ indicates the projection onto the plane.}
\label{fig:cluster}
\end{figure}
Results are presented in Figure~\ref{fig:cluster}. Due to sampling variability, it is standard to calculate $H$ multiple times and take the average. On the plot, we present for each $k \in \{ 1, \ldots, L\}$, with $L = 12$ (number of multiple canonical correlations) the average value of $H$ for 100 replications.
Multiple functional canonical variables are superior to multiple kernel canonical ones. This method exhibits better clusterability in just two components, indicating that the group structure has been more accurately mapped. For both techniques, the Hopkins statistic approaches one as the number of components increases. However, using only one component ($k=1$) is not sufficient.

\subsection{Calculations details}
Calculations were performed in \texttt{R} 4.2.2 \cite{RCoreTeam2019}, using the packages \textit{fda} \cite{Ramsay2005b}, \textit{geigen} \cite{Hasselman2019}, \textit{hopkins} \cite{wright23} and \textit{RGCCA} \cite{Tenenhaus2017a}.

\section{Concluding remarks} \label{sec:final}
Hotelling's classical canonical correlation analysis has been generalized in two ways in this paper. First, random vectors are replaced by random matrices containing observations of multiple variables at $T$ time points on the same experimental unit (in this case, we are dealing with multivariate repeated measures data, also known as doubly multivariate data).  Relationships between a finite number of random matrices, not limited to two, are considered. In this case, multiple kernel canonical variables are constructed. Secondly, we replace random vectors with multidimensional random processes. In this case, the experimental units are characterized by smooth functions over a time interval (elements of Hilbert space). The experimental units are characterized by multivariate functional data.
Moreover, as before, we consider the relationships between a finite number of multidimensional random processes, not limited to two. In this case, multiple functional canonical variables were constructed. In the case of actual data on yields per hectare for three different groups of crops, multiple functional canonical variables proved superior. A similar result was obtained for the GCI data set.

The proposed methods apply when we have multiple sets of features and aim to examine their dependencies. Additionally, a third dimension is incorporated, such as time or space. The objectives of these techniques are twofold. On the one hand, we aim to present data on a plane to assess dependencies visually.
On the other hand, obtaining vector representations for observations considering time/space allows us to use them in other analyses where direct usage may not be feasible. Canonical correlations for functional data can be applied to such data, but until now, they have only been used for two sets of features. Similarly, in such situations, an appropriate technique of kernel canonical correlations can be employed, which, we hope, captures nonlinear dependencies more effectively. Choosing a specific method is challenging, and generally, it can only be made with at least a preliminary data set analysis. If we have additional information about the groups to which observations belong, we can apply a clusterability assessment, as done in our work.

We have justified our methodology under assumptions weaker than those
used in previous research, even in simpler settings. Specifically, we showed that  certain covariance operators need not be compact and the observational units can be dependent. A high level assumption is formulated that covers both cases.

The proposed operator-based framework for multiple CCA not only unifies 
existing approaches but also offers a principled way to handle dependence 
and non-compactness in high-dimensional or functional data. Future work 
may focus on extending these ideas toward regularized and sparse formulations, 
which could further improve stability and interpretability in complex empirical 
settings.

\section*{Data availability statement}
The data that support the findings of this study are available at \url{http://stat.gov.pl} and \url{http://www.weforum.org}.

\end{document}